%
%
%


\documentclass{jams-l}

\usepackage{enumerate}
\usepackage{graphicx}
\usepackage{xcolor}
\usepackage{tikz}
\usepackage{comment}

\usepackage{tikz}
\usetikzlibrary{arrows}

\usepackage{hyperref}       





\newtheorem{theorem}{Theorem}
\newtheorem*{theorem*}{Theorem}
\newtheorem{lemma}[theorem]{Lemma}

\newtheorem*{definition}{Definition}
\newtheorem{claim}[theorem]{Claim}
\newtheorem{obs}[theorem]{Observation}

\newtheorem*{remark*}{Remark}
\newtheorem*{claim*}{Claim}

\newtheorem*{remark}{Remark}

\newcommand{\perm}{\mathsf{perm}}
\renewcommand{\det}{\mathsf{det}}

\newcommand{\X}{\mathcal{X}}
\newcommand{\Y}{\mathcal{Y}}
\newcommand{\F}{\mathbb{F}}

\newcommand{\N}{\mathbb{N}}
\newcommand{\dc}{\mathsf{dc}}

\newcommand{\FF}{\mathbb{F}}
\newcommand{\NN}{\mathbb{N}}

\numberwithin{equation}{section}

\begin{document}

\title{The Algebraic Cost of a Boolean Sum}



\author{Ian Orzel}
\address{Department of Computer Science, The University of Copenhagen}

\author{Srikanth Srinivasan}\thanks{Srikanth Srinivasan was funded by the European Research Council (ERC) under grant agreement no. 101125652 (ALBA). Amir Yehudayoff is supported by a DNRF chair grant, and partially supported by BARC. Sébastien Tavenas is supported by ANR-22-CE48-0007.}
\address{Department of Computer Science, The University of Copenhagen}

\author{S\'{e}bastien Tavenas}
\address{Charg\'e de Recherche CNRS au Laboratoire LAMA de \'lUniversit\'e Savoie Mont Blanc}

\author{Amir Yehudayoff}
\address{Department of Computer Science, The University of Copenhagen,
and Department of Mathematics, Technion-IIT}

\begin{abstract}
It is a well-known fact that the permanent polynomial is complete for the complexity class VNP, and it is largely suspected that the determinant does not share this property, despite its similar expression.
We study the question of why the VNP-completeness proof of the permanent fails for the determinant.
We isolate three fundamental properties that are sufficient to prove a polynomial sequence is VNP-hard, of which two are shared by both the permanent and the determinant.
We proceed to show that the permanent satisfies the third property, which we refer to as the ``cost of a boolean sum," while the determinant does not, showcasing the fundamental difference between the polynomial families.
We further note that this differentiation also applies in the border complexity setting and that our results apply for counting complexity.
\end{abstract}

\maketitle

%

\section{Introduction}

In an attempt to introduce an algebraic approach to solving the classical P versus NP problem, \cite{valiant1979completeness} introduces algebraic circuits and the complexity classes VP and VNP.
Central to this approach were two fundamental objects: the determinant and the permanent.
The determinant belongs to the class VP (it is, in fact, complete for a restriction of this class, VBP, or even VP if we consider quasi-polynomial projections), and the permanent is complete for the class VNP, as proven in \cite{valiant1979completeness}.
We suspect that the determinant is not VNP-complete, meaning that Valiant's proof of the VNP-hardness of the permanent should fail when the determinant is used in its place.
It has been unclear exactly why this is the case, although one would expect it to be related to the construction of a particular graph-theoretical gadget, known as the \emph{iff-coupling} gadget, being impossible in the model of the determinant. However, this is a somewhat narrow and technical reason.

This paper seeks to develop a more general understanding of the reasons this VNP-hardness proof fails when applied to the determinant.
In exploring this question, we simplify the proof for the VNP-completeness of the permanent and isolate the exact property that the determinant lacks which causes this VNP-hardness proof to fail (assuming all other properties are the same).

Recall that the VNP-completeness proof of the permanent begins by taking a formula and summing it over all $0$-$1$ substitutions of some of its variables (representing an element of a VNP polynomial sequence).
This formula is then transformed into a similarly-sized permanent. We then iteratively remove each $0$-$1$ substitution of a particular variable by modifying the matrix we take the permanent of, only increasing its size by a small amount.
This iterative removal is the key piece of the proof that makes it work in the case of the permanent, but, to the authors' knowledge, there does not exist a proof that this step breaks down for the determinant.
Studying this step in its full generality is difficult, as the given matrix could have a complex structure. 
In this paper, we simplify the VNP-completeness proof by showing that we can, without loss of generality, assume that the matrix has exactly two instances of the variable that we seek to remove. Using a small (standard) gadget construction, a boolean sum over two variables can be simulated by a permanent of a matrix that is bigger \emph{only by an additive constant}, which shows that the permanent is VNP-complete. 
We can then show that the proof fails for the determinant by proving that the last step of this process must incur a super-constant (in fact, a constant \emph{multiplicative factor}) increase in the size of the matrix.

We generalize this concept of ``iterated substitutions" to arbitrary polynomials through the boolean sum.
Given a polynomial $f \in \FF[\X]$ and a subset of its variables $S \subseteq \X$, we can represent the boolean sum, which we denote by $\Sigma_Sf$, as the sum of taking $f$ with each variable in $S$ set to $0$ and the same with respect to $1$.
We can now rephrase this step of the proof as converting a boolean sum over an arbitrary-sized set of variables to multiple boolean sums over two variables.

In this paper, we have isolated the following properties as sufficient for proving the VNP-hardness of a polynomial:
\begin{enumerate}
    \item The polynomial sequence is VF-hard, i.e., a polynomial computed by a formula can be represented as a projection of a polynomial in the sequence (we can replace arguments by variables or constants), indexed by a polynomial in the size of the formula,
    \item A boolean sum over many variables can be rewritten as a sequence of boolean sums, each over two variables, and
    \item A boolean sum of a polynomial over two variables can be rewritten as a simple projection of itself, whose index is only increased by a fixed constant.
\end{enumerate}

With these properties in mind, the main results of our paper can be summarized as follows:
\begin{itemize}
    \item A polynomial sequence satisfying properties (1), (2), and (3) is VNP-hard.
    \item Both the determinant and the permanent satisfy properties (1) and (2).
    \item The permanent satisfies property (3), but the determinant does not.
\end{itemize}
In short, we have isolated the simple property that differentiates the determinant from the permanent in our ability to prove VNP-completeness.

\subsection{Related Work}
The focus of this paper is on studying the noteworthy differences in properties between the permanent and the determinant polynomials, which is an important research direction, as it seems that the separation between VP and VNP is encoded in the differences between these polynomials.
It is therefore that the differences in these polynomials have been well-studied.

One significant line of study were those seeking to find lower bounds on the size of a determinant that simulates the permanent (determinantal complexity).
In a result of von zur Gathen~\cite{gathen}, a lower bound was determined by studying the algebraic-geometric properties of the determinant and the permanent.
Specifically, there was a difference between the dimension of the spaces of singular points of the determinant and permanent (points where a polynomial and all of its first-order partial derivatives vanish).
Then, in a result of Mignon and Ressayre~\cite{mignon2004quadratic} (see also~\cite{cai2010quadratic}), the bound was strengthened by studying the rank of the Hessian of these polynomials at vanishing points.
It was shown that all zeros of the determinant have a very small Hessian rank, whereas the permanent has zero points with very large Hessian rank.

The work of Landsberg and Ressayre~\cite{LR} gives an exponential lower bound on the determinantal complexity of the permanent in a restricted setting.
They showed that any reductions from the permanent to the determinant that preserve the underlying symmetries must have exponential cost.

Recent work of Hrub\v{e}s and Joglekar~\cite{HJ} points to a distinction in reductions to \emph{read-once} determinants and permanents, which forces each variable to appear in the corresponding matrix at most once.
Here, they show that every multilinear polynomial can be represented as a read-once permanent, whereas representations using the determinant could require a variable to appear at least $\omega(\sqrt{n}/\lg{n})$ times.

\subsection{Our Results}
We have isolated three properties that are fundamental to understanding our VNP-completeness results for the permanent.
We will begin by formally stating these properties.

\begin{definition}
We say that a polynomial sequence $f = (f_i)_{i \in \NN} \in \FF[\X]$ \emph{simulates formulas} if, for every polynomial $g \in \FF[\X]$ that can be computed by a formula of size $s$, $g$ is a projection of some $f_i$, where $i$ is polynomial in $s$. In other words, $f$ is VF-hard.

We further say that \emph{$k \in \NN$ variable instances are sufficient for boolean sums of $f$} if, for every boolean sum $\Sigma_{S_1} \dots \Sigma_{S_\ell}f_i$ [see (\ref{form:bool_sum})], there is an $\ell'$ and $m$ that are polynomial in $i$ and $\ell$ such that this sum can be rewritten as $\Sigma_{T_1} \dots \Sigma_{T_{\ell'}}f_m$, where the size of each $T_i$ is $k$.

Finally, we define the algebraic additive cost of boolean summation over $f$ via the map $\alpha_f : \N \to \N\cup\{\infty\}$,
defined as follows: for each $s \in \N$, let $\alpha_f(s)$ be the minimum integer\footnote{If there is
no such integer, then it is infinite.}
$k$ so that, for every $n$ and $S \subseteq \X_n$ of size $|S| \leq s$,
the polynomial $\Sigma_S f_n$ is a simple projection of $f_{n+k}$.
In other words, $\alpha_f(s)$ is the additive $f$-cost of performing a boolean sum
over $s$ variables. 

We can further naturally move the definition of $\alpha_f$ into the border complexity setting, which we denote by $\underline{\alpha_f}$.
Formally, we let $\underline{\alpha_f}(s)$ be the minimum integer $k$ so that there is a border sequence $F = (F_i)_{i \in \NN}$ over $\FF(\epsilon)[\X]$ such that, for every $i \in \NN$, there exist polynomials $g_i\in \FF[\X\cup \{\epsilon\}]$ and $Q_i\in \FF[\epsilon]$ and $M\in \mathbb{Z}$ such that
\[
F_i = \frac{\epsilon^{M}\cdot f_i + \epsilon^{M+1}\cdot g_i}{1+\epsilon Q_i(\epsilon)},
\]
 and furthermore, $\alpha_F(s) = k$. The algebraic expression captures our intuition that $\lim_{\epsilon \rightarrow 0} F_i = f_i.$

\end{definition}

We can now introduce our main results that use these properties. We first have that these properties are sufficient for VNP-hardness.
\begin{theorem}\label{thm:vnp_hard}
If, for a fixed $k \in \NN$, $f = (f_i)_{i \in \NN} \in \FF[\X]$ is such that
\begin{enumerate}
    \item $f$ simulates formulas,
    \item $k$ variables instances are sufficient for boolean sums over $f$, and
    \item $\alpha_f(k) < \infty$,
\end{enumerate}
then $f$ is VNP-hard.

Further, if the projections and reductions associated with these three properties can be completed in polynomial time, then $f$ is \#P-hard.
\end{theorem}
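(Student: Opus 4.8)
The plan is to show that every family in VNP is a $p$-projection of $f$, by starting from the defining exponential sum of a VNP family and stripping it away one normalized block at a time using properties (1)--(3) in turn. The starting point is the standard characterization of VNP: every family $(g_n)_{n\in\NN}$ in VNP can be written as
\[
g_n(\bar{x}) \;=\; \sum_{\bar{e}\in\{0,1\}^{p(n)}} \phi_n(\bar{x},\bar{e}),
\]
where $(\phi_n)$ is computed by formulas of size polynomial in $n$ (the verifier may be taken to be a formula, not merely a circuit). In the language of the paper this exponential sum is an iterated boolean sum over singletons, $g_n = \Sigma_{\{e_1\}}\cdots\Sigma_{\{e_{p(n)}\}}\phi_n$. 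I would then use property (1) to replace the verifier $\phi_n$ by a simple projection of some $f_i$ with $i$ polynomial in $n$. The one bookkeeping point here is that a boolean sum of a simple projection of $f_i$ is itself a simple projection of an iterated boolean sum of $f_i$: if $\phi_n = f_i(\rho)$ for a variable/constant substitution $\rho$, then summing over a variable $e_j$ of $\phi_n$ amounts to a coupled sum over its $\rho$-preimage. Pushing all the sums inward this way presents $g_n$ as a simple projection of an iterated boolean sum $\Sigma_{S_1}\cdots\Sigma_{S_r} f_i$ with $r = p(n)$.

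Next I would invoke property (2) to normalize this expression, rewriting it as $\Sigma_{T_1}\cdots\Sigma_{T_{\ell'}} f_m$ with every coupled set $T_j$ of size exactly $k$ and with $\ell'$ and $m$ polynomial in $n$. Property (3) then supplies the elimination step: since $c := \alpha_f(k) < \infty$, any boolean sum over a set of size $k$ can be absorbed into the sequence at additive index cost $c$. Absorbing the blocks from the innermost outward, each step should turn a boolean sum of the current polynomial into a simple projection of the next element of $f$, raising the index by $c$; after all $\ell'$ blocks this would exhibit $g_n$ as a simple projection of $f_{m + c\ell'}$. Since $c$ is a fixed constant and $m,\ell'$ are polynomial in $n$, the index $m + c\ell'$ is polynomial in $n$, and composing with the outer projection from the previous paragraph (a composition of simple projections is again simple) completes the reduction.

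The step I expect to be the main obstacle is precisely this iterative absorption, because boolean sums do not pass through a simple projection for free. When property (3) rewrites $\Sigma_{T_{\ell'}} f_m = f_{m+c}(\rho)$, a surviving variable occurring in an earlier block $T_j$ appears as an image under $\rho$, and summing over it couples its entire $\rho$-preimage; if $\rho$ is many-to-one on the surviving variables, the coupled set $\rho^{-1}(T_j)$ can grow beyond size $k$, and property (3) no longer applies to the next block. Simply re-normalizing with property (2) after each absorption is not viable, since property (2) raises the index polynomially, and iterating that through polynomially many blocks would blow the index up super-polynomially. The resolution I would pursue is to show that the absorbing projection of property (3) can be taken injective on the surviving variables, so that each pullback $\rho^{-1}(T_j)$ again has size $k$ and the induction on $\ell'$ closes with only the additive growth $c$ per block; this is exactly the behaviour of the concrete constant-size coupling gadgets available for the permanent, and making it hold abstractly from $\alpha_f(k) < \infty$ is the delicate part.

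Finally, for the \#P-hardness addendum, I would observe that if the projection from (1), the rewriting from (2), and the absorptions from (3) are all computable in polynomial time, then the entire reduction above is a single polynomial-time projection. Applying this reduction to the permanent, which is \#P-complete, then shows that evaluating $f$ is \#P-hard.
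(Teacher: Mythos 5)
Your proposal is correct and follows essentially the same route as the paper's proof: write the VNP family as an iterated boolean sum of a polynomial-size formula, use property (1) to pass to a projection of some $f_d$ with coupled sum-blocks, use property (2) to normalize all blocks to size exactly $k$, and then absorb the blocks one at a time via property (3) at additive index cost $\alpha_f(k)$ each, concluding with a polynomially-indexed simple projection (the \#P addendum is likewise handled by running the reduction in polynomial time, though the paper reduces from $\sharp 3$-SAT rather than via the permanent). The step you single out as the ``delicate part'' is in fact already built into the hypotheses: the paper defines $\alpha_f(k)$ so that $\Sigma_S f_n$ is a \emph{simple} projection of $f_{n+k}$, meaning each target variable has at most one preimage, which is exactly the injectivity on surviving variables you wanted --- so each earlier block pulls back to a set of the same size $k$, and the induction closes with only the additive growth per block, just as in the paper's proof.
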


We use $\perm$ and $\det$ to denote the permanent and determinant polynomials respectively. It is a well-known fact that $\perm$ and $\det$ can simulate formulas.
\begin{lemma}[\cite{valiant1979completeness}]
The $\perm$ and $\det$ polynomial sequences can simulate formulas.
\end{lemma}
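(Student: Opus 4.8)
The plan is to reduce the claim to two classical and essentially standard constructions: first, that a formula can be converted into a layered algebraic branching program (ABP) of polynomially related size whose edges carry single variables or single field constants, and second, that such an ABP can be encoded as a permanent (resp.\ determinant) of a matrix of the same order, realized as an honest projection.

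\textbf{From formulas to ABPs.} Given $g$ computed by a formula of size $s$, I would build a directed acyclic multigraph $G$ with a source $a$ and sink $b$ so that the sum, over all $a$-$b$ paths, of the products of the edge weights equals $g$. This is done by structural induction on the formula: a leaf labeled by a variable or constant $w$ becomes a single edge of weight $w$; a sum gate $g_1+g_2$ is realized by identifying the sources and identifying the sinks of the two subgraphs (parallel composition); a product gate $g_1\cdot g_2$ is realized by identifying the sink of the first subgraph with the source of the second (series composition). Since multiplication is implemented by concatenating paths rather than by multiplying two labels on one edge, every edge of $G$ carries a \emph{single} variable or field constant, and $G$ has $O(s)$ vertices and edges. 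A routine padding argument then makes $G$ \emph{layered}, so that every $a$-$b$ path has the same length $d\le\mathrm{poly}(s)$; this costs only a polynomial factor in the number of vertices.

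\textbf{From ABPs to the permanent/determinant.} Let $A$ be the weighted adjacency matrix of the graph obtained from $G$ by adding a back edge $b\to a$ of weight $1$ and a self-loop of weight $1$ at every vertex except $a$. Writing $\perm(A)=\sum_\sigma\prod_i A_{i,\sigma(i)}$ as a sum over cycle covers, I would argue that the only incoming edge to $a$ is the back edge, so $a$ can be covered only by using $b\to a$, which forces each cover to consist of exactly one $a$-$b$ path closed by the back edge together with self-loops on all remaining vertices. Hence $\perm(A)=g$, and since all entries of $A$ are single variables, single constants, $0$, or $1$, this exhibits $g$ as a projection of $\perm_n$ with $n=\mathrm{poly}(s)$. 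For the determinant, the same cover structure applies, but each contributing permutation is a product of fixed points (the self-loops) and a single cycle of length $d+1$, so it carries the uniform sign $(-1)^{d}$; here the layeredness from the previous step is exactly what guarantees a \emph{common} sign across all paths. I would absorb this sign into the projection by negating one edge weight (equivalently, multiplying a single constant entry by $-1$), giving $\det(A')=g$ as a projection of $\det_n$.

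\textbf{Main obstacle.} The permanent direction is essentially immediate once the graph encoding is in place, as there are no signs to track. The only genuinely delicate point is the determinant: cycle covers encoding shorter or longer paths would otherwise contribute with conflicting signs and could cause unwanted cancellation. The resolution is the layering step, which forces every realizing cycle to have the same length and hence the same sign; verifying that layering can be carried out with only a polynomial blow-up, and that no spurious cycle covers beyond the intended path-plus-self-loops covers survive (using that $G$ is acyclic and $a$ has no self-loop), is where I would spend the most care.
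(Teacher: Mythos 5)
Your proposal is correct and takes the same route as the classical argument the paper invokes: the paper gives no proof of this lemma (it cites Valiant), and the standard proof is exactly your formula-to-ABP-to-cycle-cover construction, with layering forcing a uniform sign for the determinant (alternatively, weight $-1$ self-loops make the signs uniform without layering). The only detail to add is that the parallel edges produced by sum gates must be subdivided before passing to an adjacency matrix, since a matrix entry may carry only a single variable or constant; this is a routine fix that preserves path weights and is compatible with the layering step.
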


Our main deviation from the well-known VNP-hardness proof is through the decrease of the number of instances of the variables we sum over to two. This idea is stated in the following lemma.
\begin{lemma}\label{lem:two_vars}
Two variables instances are sufficient for boolean sums over the $\perm$ and $\det$ polynomials.
\end{lemma}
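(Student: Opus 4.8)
The plan is to prove the statement for a single iterated Boolean sum by replacing each coupling $\Sigma_{S}$ of size $s>2$ with a chain of size-$2$ couplings held together by small equality gadgets, so that only the ``all-zero'' and ``all-one'' settings of $S$ survive. Since $\Sigma_{S_1}\cdots\Sigma_{S_\ell}f_i$ is processed group by group and the groups act on disjoint variables, it suffices to describe the reduction for one group $S=\{x_1,\dots,x_s\}$ and then to sum up the costs.

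First I would introduce the equality gadget $E(p,q):=pq+(1-p)(1-q)$, which on Boolean inputs is the indicator $[p=q]$. The key point is that $E$ is computed by a permanent (and a determinant) of a matrix of constant size whose entries lie in $\X\cup\FF$: the factor $1-p$ is realized by a $2\times 2$ block (e.g.\ $1-p=\det\bigl(\begin{smallmatrix}1&p\\1&1\end{smallmatrix}\bigr)=\perm\bigl(\begin{smallmatrix}1&p\\-1&1\end{smallmatrix}\bigr)$), the products $pq$ and $(1-p)(1-q)$ by taking block-diagonal (direct) sums, and their sum by the standard addition gadget. All of these composition rules are available for both $\perm$ and $\det$, so multiplying $f$ by a product of such gadgets is again a $\perm$ (resp.\ $\det$) of a matrix whose dimension has grown only by $O(s)$.

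Next I would verify the core identity. Pairing the variables consecutively and inserting one gadget between neighbouring pairs, I claim that for even $s$
\[
\Sigma_{\{x_1,\dots,x_s\}} f \;=\; \Sigma_{\{x_1,x_2\}}\Sigma_{\{x_3,x_4\}}\cdots\Sigma_{\{x_{s-1},x_s\}}\Big[\,f\cdot\prod_{i=1}^{s/2-1} E(x_{2i},x_{2i+1})\,\Big],
\]
and for odd $s$ the same formula with $x_s$ paired against a fresh phantom variable $w$ that occurs nowhere in $f$ (so that $\Sigma_{\{x_s,w\}}$ still just sums $x_s$ over $\{0,1\}$). Indeed, the $j$-th size-$2$ coupling forces $x_{2j-1}=x_{2j}=:a_j$ and sums over $a_j\in\{0,1\}$; the product of gadgets then evaluates to $\prod_j[a_j=a_{j+1}]$, which is $1$ exactly when $a_1=\cdots=a_{\lceil s/2\rceil}$ and $0$ otherwise. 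Hence only the monomials with all $x_i$ equal to a common bit survive, leaving $f|_{x_1=\cdots=x_s=0}+f|_{x_1=\cdots=x_s=1}=\Sigma_{\{x_1,\dots,x_s\}}f$. I would double-check the small cases $s=3,4$ by hand to confirm the bookkeeping.

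Finally I would assemble the bound. Applying this to every group $S_j$ of $\Sigma_{S_1}\cdots\Sigma_{S_\ell}f_i$ replaces it by $\ell'=\sum_j\lceil |S_j|/2\rceil$ couplings, each of size exactly $2$, over a permanent (resp.\ determinant) of size $m=i+O(\sum_j|S_j|)$; since $\perm_i$ and $\det_i$ have only $i^2$ variables, both $\ell'$ and $m$ are polynomial in $i$ and $\ell$, as required. The hard part will be the gadget construction of the middle step: one must exhibit a constant-size equality (``iff-coupling'') gadget that is \emph{simultaneously} a small $\perm$ and a small $\det$ using only entries from $\X\cup\FF$, and—especially for the determinant—check that the signs appearing in the cycle-cover (equivalently, branching-program) expansion do not corrupt the value of $E$. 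Once such a gadget is in hand, the combinatorial fact that a connected chain of pairwise equalities forces global equality is immediate, and the polynomial blow-up follows automatically.
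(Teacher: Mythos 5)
Your chain identity is correct as a polynomial identity, but the construction does not prove the lemma in the sense the paper needs, and the reason is exactly the word \emph{instances} in the statement. The property being proved is not merely that the couplings $T_j$ have cardinality two; it is that the rewriting has the form $\Sigma_{T_1}\cdots\Sigma_{T_{\ell'}}f_m$ where $f_m$ is the generic $\perm_m/\det_m$ (up to a simple projection), so that each coupled pair consists of two \emph{matrix entries}, i.e.\ two occurrences in the matrix. In your construction the coupled variables themselves carry the gadgets: $x_{2i}$ occurs once in the original matrix $B$ \emph{and} again inside the block realizing $E(x_{2i},x_{2i+1})$ (likewise $x_{2i+1}$), so a coupling $\{x_{2i-1},x_{2i}\}$ touches three or four positions of the final matrix. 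Equivalently, $\det(C)$ for your block-diagonal $C$ is only a \emph{non-simple} projection of the generic determinant, and lifting your pair-sums to the generic matrix forces sets of size $3$ or $4$, not $2$. This breaks the interface with the rest of the paper: Claim~\ref{clm:per} (giving $\alpha_\perm(2)\le 3$) removes a sum over two variables that are single entries of the generic matrix, so with your couplings Theorem~\ref{thm:vnp_hard} would need $\alpha_\perm(3)$ or $\alpha_\perm(4)<\infty$, which is never established. Note also that the obstruction is not the one you flagged as ``the hard part'': constant-size, even read-once, realizations of $E(p,q)=2pq-p-q+1$ do exist (e.g.\ a $3\times 3$ determinant); the problem is the chain topology, which inevitably makes the summed variables appear inside the gadgets.

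The paper's proof avoids this with a different coupling pattern: each original variable $x_i$ is paired with a \emph{fresh} variable $y_i$ (so $x_i$ keeps its single occurrence in $B$), and the global constraint is carried entirely by the gadget polynomial $g(y)=\prod_i y_i+\prod_i(1-y_i)$, which selects the all-zero and all-one assignments just like your chain of equalities does. The essential technical step is then to realize $g$ as $\det(A)$ (resp.\ $\perm(A)$) for a $2m\times 2m$ matrix $A$ in which \emph{each $y_i$ appears exactly once} --- this is done via a cycle-cover argument with one long cycle running through all blocks. Block-diagonal composition gives $g\cdot\det(B)=\det(\mathrm{diag}(B,A))$, and every coupling $\{x_i,y_i\}$ then has exactly two instances. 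If you want to salvage your approach, you would have to push the equality constraints off the $x_i$'s onto fresh variables and realize the resulting ``all-equal'' gadget read-once, at which point you have essentially reconstructed the paper's matrix $A$.
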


Finally, the main result of this paper is through differentiating the determinant from the permanent, which is stated in the following theorem.
\begin{theorem}\label{thm:alpha_two}
We have that $\alpha_\perm(2) \le 3$, but $\alpha_\det(2) = \infty$.
We further observe that $\underline{\alpha_\det}(2) = \infty$.
\end{theorem}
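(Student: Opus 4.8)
The three assertions call for three separate arguments: an explicit gadget for the permanent upper bound, and a Hessian-rank obstruction for the two determinant lower bounds.

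For $\alpha_\perm(2) \le 3$, the plan is to produce an explicit ``iff-coupling'' gadget. Given $n$ and a two-element set $S = \{y,z\} \subseteq \X_n$, I would write $\perm_n = A + By + Cz + Dyz$, where $A,B,C,D$ do not involve $y,z$, so that $\Sigma_S\perm_n = 4A+2B+2C+D$. Reading $\perm_{n+3}$ through the cycle-cover interpretation (the permanent of a weighted digraph is the sum over cycle covers of the product of edge weights), I would adjoin three new vertices carrying a fixed weighted gadget and relocate $y,z$ onto designated gadget edges, arranging that the cycle covers of the augmented graph reweight the four Boolean assignments of $(y,z)$ with exactly the multiplicities $4,2,2,1$. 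Correctness is then a finite expansion of the $3\times 3$ gadget block; the point to check is that the gadget contributes the factor $2$ precisely when a summed edge is avoided and the factor $1$ when it is used. The absence of signs in the permanent is what keeps the gadget of constant size.

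For $\alpha_\det(2) = \infty$ the plan is to lower bound the (affine) determinantal complexity of $\Sigma_S\det_n$, which suffices since a simple projection is in particular an affine substitution: if $\Sigma_S \det_n$ were a simple projection of $\det_{n+k}$, then $\dc(\Sigma_S\det_n) \le n+k$. I would invoke the Hessian-rank bound of Mignon and Ressayre \cite{mignon2004quadratic}: for any $f$ and any point $v$ with $f(v)=0$ and $\nabla f(v)=0$, one has $\dc(f) \ge \frac{1}{2}\,\mathrm{rank}\,H_f(v)$, where $H_f$ is the Hessian. Since $\det_n$ itself has Hessian rank at most $2n$ at every such singular point, the task is to show that a single two-variable Boolean sum pushes this rank past $2n$ by an amount growing with $n$. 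Writing $\Sigma_S\det_n = 4A+2B+2C+D$, its Hessian is $4H_A+2H_B+2H_C+H_D$, a sum of the Hessian of the full determinant $A$ (rank up to $2n$) and those of the strictly smaller minors $B,C,D$. The plan is to construct, for each large $n$, a singular point $v$ of $\Sigma_S\det_n$ at which the big-determinant piece has corank one and the images of $H_A$ and $H_D$ (the minor deleting the two summed rows and columns) are essentially independent, so that the ranks add to at least $(2+c)n$ for an absolute constant $c>0$. This yields $\dc(\Sigma_S\det_n) \ge (1+\frac{c}{2})n$, which exceeds $n+k$ once $n$ is large; hence no fixed $k$ works and $\alpha_\det(2)=\infty$.

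The main obstacle is precisely this construction: the minors $B,C,D$ live inside the variable set of $A$, so their Hessian images overlap a priori, and I must choose $v$ so that simultaneously (i) the combined gradient $4\nabla A + 2\nabla B + 2\nabla C + \nabla D$ vanishes, and (ii) the two relevant Hessian images remain in near-direct sum. Balancing the gradient cancellation against this rank-additivity is the delicate step; I expect to handle it by placing the two summed entries and the point $v$ so that $A$ is singular in one distinguished set of directions (those of the rank-one kernel) while the deleted-minor term contributes independent hyperbolic pairs on a disjoint set of directions. Finally, for $\underline{\alpha_\det}(2) = \infty$ I would run the same argument over the field $\FF(\eps)$: a border representation $F_i = (\eps^{M}\det_i + \eps^{M+1}g_i)/(1+\eps Q_i)$ with $\alpha_F(2)=k$ gives an exact identity expressing $\Sigma_S F_i$ as a simple projection of $F_{i+k}$ over $\FF(\eps)$, so the Mignon--Ressayre bound applies verbatim there; specializing to leading order in $\eps$ recovers $\det_i$, and since Hessian rank is lower semicontinuous it can only drop in the limit, so the same growing lower bound on $\dc$ persists and forces $\alpha_F(2)$, hence $\underline{\alpha_\det}(2)$, to be unbounded.
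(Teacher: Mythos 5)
There is a genuine and fatal gap, and it comes before any of the constructions: you have misread what $\Sigma_S$ means. For $S=\{y,z\}$, the paper's $\Sigma_S f$ \emph{couples} the two variables: both are replaced by a single fresh variable which is then summed over $\{0,1\}$, so $\Sigma_S f = f|_{y=z=0} + f|_{y=z=1}$, which in your notation is $2A+B+C+D$, not the four-term sum $\sum_{(y,z)\in\{0,1\}^2} f = 4A+2B+2C+D$. This distinction is the entire content of the theorem. Your four-term sum factors into two independent single-variable sums, and since $\det_n$ is affine in each entry, $4A+2B+2C+D = 4\,\det_n|_{x_{1,1}=x_{2,2}=1/2}$, which is a \emph{simple projection of $\det_n$ itself} (projections allow the scalar $\lambda$). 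Hence the polynomial you propose to analyze has $\dc \le n$, and by the very Mignon--Ressayre bound you invoke, its Hessian rank at every zero is at most $2n$: the singular point with rank $(2+c)n$ that you plan to construct provably does not exist, and under your reading one would have $\alpha_\det(2)=\alpha_\perm(2)=0$. It is precisely the coupled sum that resists the $1/2$-substitution trick (which is why $\alpha_\det(1)=0$ yet $\alpha_\det(2)=\infty$), and your permanent gadget is likewise aimed at the wrong target: it should realize $2A+B+C+D$, for which the multiplicity pattern $(4,2,2,1)$ is incorrect.

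Even setting the definitional issue aside, your proposal defers exactly the steps that constitute the proof. For the lower bound, the paper needs no singular point: Mignon--Ressayre requires only $f(a)=0$, not $\nabla f(a)=0$, so your added gradient condition is an unnecessary complication that makes the search harder. The paper simply evaluates at the explicit matrix $A$ with $a_{1,2}=\tfrac12$, $a_{2,1}=1$, an identity block in the last $n-2$ coordinates and zeros elsewhere; then $f(A)=0$, and the $4(n-2)$ rows and columns of the Hessian indexed by $x_{1,i},x_{2,i},x_{i,1},x_{i,2}$ for $i>2$ form a block-diagonal matrix with $n-2$ blocks each of rank $4$, giving $\dc(\Sigma_{\{x_{1,1},x_{2,2}\}}\det_n)\ge 2(n-2)$, which exceeds $n+k$ for every fixed $k$ once $n$ is large. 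For the upper bound, the paper exhibits a concrete $3\times3$ block $U$ whose permanent and sub-permanents take the values $2,1,1,1,0,0$ needed so that $\perm(E)=2A+B+C+D$ for an $(n+3)\times(n+3)$ matrix $E$; your cycle-cover sketch names the right mechanism but contains no such gadget. The border claim inherits the same defect, since it rests on the (nonexistent, under your reading) high-rank singular point; the paper instead observes that the Mignon--Ressayre bound survives taking borders by a result of Grochow applied to the homogenization of $\Sigma_{\{x_{1,1},x_{2,2}\}}\det_n$.
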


\subsection{Structure of Paper}
First, in Section \ref{sec:prelim}, we will introduce our notation and definitions.
Then, in Section \ref{sec:det}, we will prove Theorem \ref{thm:alpha_two}, which will differentiate the permanent from the determinant. We will do this using a well-known result of Mignon and Ressayre\cite{mignon2004quadratic} and through a careful selection of matrices.
In Section \ref{sec:two_sum}, we will prove Lemma \ref{lem:two_vars}, showing that we can transform arbitrarily-sized boolean sums to those of size two.
In Section \ref{sec:vnp_complete}, we provide the proof of Theorem \ref{thm:vnp_hard}, which is (in our opinion) a slightly simpler and more modular proof of the VNP-completeness of the permanent.

\section{Preliminaries}\label{sec:prelim}
Let $\FF$ be a field of characteristic not two. Let $\X$ be an infinite set of indeterminants and $\FF[\X]$ be the polynomial ring containing polynomials that use a finite number of indeterminants in $\X$. We will write $(f_i)_{i \in \NN} \in \FF[\X]$ to denote a sequence a polynomials (we will somtimes simply write it as $(f_i)$), and it is assumed that $f_i \in \FF[\X_i]$, where $\X_i \subseteq \X$ has size polynomial in $i$. Given a set $S$, we may use $S^{\X_i}$ to denote an assignment of the variables in $\X_i$ to elements of $S$, so that we can then see $f_i(a)$ to represent evaluating $f_i$ at the point $a \in S^{\X_i}$.

We will use the typical definitions for circuits, formulas, the complexity classes VF, VP, and VNP, and hardness within these classes. For more information, see~\cite{valiant1979completeness}.

A fundamental tool in computational complexity theory
is reduction between problems~\cite{karp1975computational}.
Reductions allow us to compare the complexities of different problems. 
In the algebraic setting, reductions are projections.
The polynomial $f \in \F[\X]$ is a projection of $g \in \F[\Y]$
if there is a map $\phi : \Y \to \X \cup \F$
so that $f = \lambda (g \circ \phi)$, where \(\lambda \in \mathbb{F}\setminus \{0\}\).
The projection is called {\em simple} if for each $x \in \X$
the cardinality of $\phi^{-1}(x)$ is at most one.

For boolean summation, we use the following notation. 
For $S \subseteq \X$,
denote by $f_S$ the polynomial in the variables $\X \cup \{y\}$, where $y$ is a new variable,
that is obtained from $f$ by the substitution $x= y$ for all $x \in S$.
For example, $(x_1+x_2^2+x_3)_{\{x_1,x_2\}} = y+y^2 + x_3$.
Denote by $\Sigma_S f$ the polynomial 
\begin{equation}\label{form:bool_sum}
\Sigma_S f = f_S |_{y=0} + f_S|_{y=1}.
\end{equation}
For example, $\Sigma_{\{x_1,x_2\}} (x_1+x_2^2+x_3) = 2 + 2 x_3$.
The variables in $S$ do not appear in $f_S$ and $\Sigma_S f$.

For two polynomial sequences $g = (g_i)_{i \in \NN}$ and $f = (f_i)_{i \in \NN}$, we say that $g$ is an iterated boolean sum of $f$ if, for every $i \in \NN$, there are disjoint sets $S_1, \ldots, S_\ell \subseteq \X$ such that $g_i = \Sigma_{S_1} \dots \Sigma_{S_\ell}f_j$, where $\ell$ and $j$ are polynomial in $i$.
Observe that a polynomial sequence being in VNP is equivalent to there being a sequence in VF each element in the VNP sequence is an iterated boolean sum of a (polynomially-indexed) element in the VF sequence.

The two central polynomial families in algebraic complexity are
the permanent and the determinant, defined by,
$$\perm(X) = \perm_n(X) = \sum_{\pi} \prod_i x_{i,\pi(i)}$$
and
$$\det(X) = \det_n(X) = \sum_\pi (-1)^{\mathsf{sign}(\pi)} \prod_i x_{i,\pi(i)},$$
where $X = (x_{i,j})$ is an $n \times n$ matrix of variables,
the sum is over permutations $\pi$ of $[n]$,
and the product is over $i \in [n]$.
The permanent and determinant can be equivalently defined as the sum of the (signed) weights of cycle covers over a graph, whose adjacency matrix is given;
see~\cite{mahajan1999determinant}.
The permanent is VNP-complete, so that VP $\neq$ VNP if and only if $\perm$ is not in VP~\cite{valiant1979completeness}.
It also appears in many places in computer science
because it encapsulates many counting problems
(see e.g.~\cite{valiant1979complexity,toda1991pp,aaronson2011linear} and references within);
by definition, it counts the number of perfect matchings in a bipartite graph.
The VP versus VNP problem is
(essentially) about the relationship between the permanent and
the determinant (see~\cite{valiant1979completeness,mulmuley2011p,malod2008characterizing} and
references within).

The determinant defines a complexity measure;
the determinantal complexity $\dc(f)$ of a polynomial $f$ is
the minimum $k$ so that $f$ is a projection of $\det_k$.
It is known that $\dc(f)<\infty$ for all $f$.
The VP versus
VNP problem is roughly captured by {\em what is $\dc(\perm_n)$}?
Proving that $\dc(\perm_n)$ is super-polynomial in $n$ is one of the major
open problems in computer science. 
The best lower bound we know is $\dc(\perm_n) \geq \frac{n^2}{2}$;
see~\cite{mignon2004quadratic,cai2010quadratic}.

Given a complexity measure $L : \FF[\X] \rightarrow \NN$, we can define its corresponding ``border" measure, denoted $\underline{L}$, to be the polynomial with the smallest $L$ measure that approximates the original polynomial.
More formally, given an $f \in \FF[\X]$, we define $\underline{L}(f)$ to be the smallest $k$ such that there exist $F \in \FF(\epsilon)[\X]$, $g\in \FF[\X\cup \{\epsilon\}]$, $Q\in \FF[\epsilon]$, and $M\in \mathbb{Z}$ such that
\[
F = \frac{\epsilon^{M}f + \epsilon^{M+1}g}{1+\epsilon Q(\epsilon)},
\]
and furthermore, $L(F) = k$.

In the field of counting complexity, we study the complexity of functions that ``count" some quantity, mapping $\varphi : \{0, 1\}^* \rightarrow \NN$.
Recall that we define the class NP as the set of decision problems, denoted by $\varphi : \{0, 1\}^* \rightarrow \{0, 1\}$, such that there is a Turing Machine taking in the input and a special "witness" string such that the Turing Machine runs in polynomial time and a string is in the language if and only if there is at least one "witness" string for which the Turing Machine accepts it.
We then define the complexity class $\sharp$P similarly, but instead we require the related Turing Machine to have the number of accepting witness strings to a given input to be the same as that defined by $\varphi$.
The quintessential $\sharp$P-complete problem is $\sharp3$-SAT, which, given a $3$-CNF formula, counts the number of satisfying assignments.

\section{Cost of Boolean Summation}\label{sec:det}
This section will focus on proving Theorem \ref{thm:alpha_two}. We will separate the proof of this theorem into two sub-claims. The first is used to show that $\alpha_\text{det}(2) = \infty$.
\begin{claim}\label{clm:det}
For any $n \in \NN$, we have that
\[
\dc(\Sigma_{\{x_{1,1}, x_{2,2}\}}\det_n) \ge 2(n-2).
\]
\end{claim}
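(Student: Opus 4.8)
The plan is to derive the bound from the Hessian criterion of Mignon and Ressayre, applied in the following convenient form: if a polynomial $f$ is a projection of $\det_m$ and $p$ is \emph{any} point with $f(p)=0$, then $\dc(f) \ge \tfrac12\,\mathrm{rank}\,\mathrm{Hess}_f(p)$. Writing $f = \det_m \circ \phi$ with $\phi$ affine (linear part $L$), the condition $f(p)=0$ forces $N := \phi(p)$ to be a singular matrix, and the Hessian of $\det_m$ at a singular $m\times m$ matrix has rank at most $2m$ (it equals $2m$ at corank $1$ and is at most $4$ at higher corank); since $\mathrm{Hess}_f(p) = L^\top \mathrm{Hess}_{\det_m}(N)\, L$, its rank is at most $2m$. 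The key point is that only $f(p)=0$ is needed, not $\nabla f(p)=0$. Hence it suffices to exhibit a single point $a$ at which $\Sigma_{\{x_{1,1},x_{2,2}\}}\det_n$ vanishes and whose Hessian has rank at least $4(n-2)$.

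Next I would unfold the Boolean sum. Writing $X$ for the generic $n\times n$ matrix, $\Sigma_{\{x_{1,1},x_{2,2}\}}\det_n = \det(X') + \det(X'')$, where $X'$ is $X$ with the $(1,1)$ and $(2,2)$ entries set to $0$, and $X''$ is $X$ with those entries set to $1$ (so $X''$ equals $X'$ plus the matrix with $1$'s at $(1,1)$ and $(2,2)$); the free variables are the $n^2-2$ entries off the positions $(1,1),(2,2)$. For a choice $a$ of these entries, $\mathrm{Hess}_g(a)$ is the sum of the principal submatrices, indexed by the free variables, of $\mathrm{Hess}_{\det_n}(X')$ and $\mathrm{Hess}_{\det_n}(X'')$. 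I would choose $a$ so that both $X'$ and $X''$ are singular of corank exactly $1$; then $g(a)=0$ automatically, and each full Hessian has rank exactly $2n$ (by the $\mathrm{GL}\times\mathrm{GL}$-invariance of the Hessian rank together with a direct computation at $\mathrm{diag}(1,\dots,1,0)$, where the only nonzero second derivatives couple $x_{i,i}$ with $x_{n,n}$ and $x_{i,n}$ with $x_{n,i}$).

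The crux is to make these two rank-$2n$ Hessians add up to rank at least $4(n-2)$ rather than collapsing onto one another. The Hessian of $\det_n$ at a corank-$1$ matrix is supported on the ``cross'' determined by its left and right kernels, so I want the kernels of $X'$ and $X''$ to lie in disjoint coordinate regions. Here is the main obstacle: $X'$ and $X''$ agree in every column except the first two, so their right kernels are tightly linked; indeed, if a right kernel vector $v''$ of $X''$ does not involve coordinates $1$ or $2$, then $v''$ also lies in $\ker X'$, forcing the two kernels to coincide, in which case the summands share a cross and the total rank is only about $2n$. The resolution is to let one kernel avoid coordinates $1,2$ and force the other to use them: take the left and right kernels of $X'$ to be $e_n$, so that $\mathrm{Hess}_{\det_n}(X')$ is supported on row and column $n$, and arrange $X''$ to have left and right kernel $e_1+e_2$ (which is compatible with the forced values $X''_{1,1}=X''_{2,2}=1$, unlike $e_1$ or $e_2$ individually). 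These crosses overlap in only the four variables $x_{1,n},x_{2,n},x_{n,1},x_{n,2}$, so after deleting the rows and columns for $x_{1,1},x_{2,2}$ the two principal submatrices are essentially block-disjoint and their ranks add to at least $4(n-2)$.

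The remaining work — the ``careful selection of matrices'' — is to pin down explicit entries of $X'$ (subject to $X'_{1,1}=X'_{2,2}=0$, $X'e_n=0$, $e_n^\top X'=0$, and the linear conditions on columns and rows $1,2$ that give $X''$ its kernel) so that $X'$ has corank exactly $1$, $X''$ has corank exactly $1$, and the small overlap of the two crosses costs at most the allowed amount, yielding $\mathrm{rank}\,\mathrm{Hess}_g(a) \ge 4(n-2)$ and hence $\dc(g)\ge 2(n-2)$. I expect the most delicate step to be checking that these simultaneous rank and corank conditions can be met at one point and that the overlap does not erode the bound below $4(n-2)$.
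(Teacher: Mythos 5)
Your setup (the Mignon--Ressayre criterion at a point where the polynomial vanishes, and unfolding the Boolean sum as $\det(X')+\det(X'')$) matches the paper's, but your choice of witness point is where the argument breaks, and it breaks for a reason you yourself articulate: the kernel linkage between $X'$ and $X''$ cuts \emph{both} ways. Since $X''=X'+E_{1,1}+E_{2,2}$ (matrix units), any right kernel vector of $X'$ whose coordinates $1,2$ vanish is automatically a right kernel vector of $X''$, and likewise for left kernels. So imposing $X'e_n=0$ and $e_n^\top X'=0$ forces $X''e_n=0$ and $e_n^\top X''=0$ as well; once you additionally arrange $e_1+e_2$ in the kernel of $X''$, that matrix has corank at least $2$, not corank exactly $1$. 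By the fact you cite yourself, the Hessian of $\det_n$ at a matrix of corank $\ge 2$ has rank at most $4$. Hence, at your point $a$, $\mathrm{rank}\,\mathrm{Hess}_g(a)\le \mathrm{rank}\,\mathrm{Hess}_{\det_n}(X')+\mathrm{rank}\,\mathrm{Hess}_{\det_n}(X'')\le 2n+4$, which is strictly smaller than $4(n-2)$ for every $n\ge 7$. So your construction provably cannot certify the claim in exactly the regime where it has content; the proposed ``resolution'' is ruled out by your own obstacle.

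The deeper tension is the decision to make both $X'$ and $X''$ singular: that caps each Hessian at rank $2n$, so you need two crosses that barely interact, yet the two-sided linkage forces every kernel and cokernel involved to have support meeting $\{1,2\}$ -- you can never park one cross at row/column $n$. The paper sidesteps this entirely by choosing a point where both $X[0]$ and $X[1]$ are \emph{invertible} and their determinants cancel ($\det X[0]=-\tfrac12$ and $\det X[1]=+\tfrac12$, using a top-left $2\times 2$ corner with off-diagonal entries $\tfrac12$ and $1$ and an identity block elsewhere); vanishing of $g$ comes from cancellation, not from singularity. The work then goes into a direct computation showing that the summed Hessian has a nonsingular $4(n-2)\times 4(n-2)$ principal submatrix on the variables $x_{1,i},x_{2,i},x_{i,1},x_{i,2}$ for $i>2$: it is block diagonal with $n-2$ blocks, each of rank $4$. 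If you want to salvage your both-singular route, you would need kernels and cokernels all meeting $\{1,2\}$ and then a separate argument bounding the overlap of the two Hessians' column spaces; this might recover a linear bound with a worse constant, but it is a different proof than the one you sketched.
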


Next, we will prove that $\alpha_\perm(2) \le 3$ through another claim.
\begin{claim}\label{clm:per}
There exists a matrix $E \in (\X \cup \FF)^{(n+3) \times (n+3)}$ such that
\[
\Sigma_{\{x_{1,1}, x_{1,2}\}}\perm = \perm(E).
\]
\end{claim}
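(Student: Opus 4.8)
The plan is to reduce the boolean sum to an ordinary permanent of a slightly larger matrix by first simplifying it algebraically and then ``legalizing'' the resulting coefficients with a constant-size gadget.

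First I would expand $\Sigma_{\{x_{1,1},x_{1,2}\}}\perm_n$ explicitly. Since both summed variables lie in the first row, they never occur together in a single monomial, so after the substitution $x_{1,1}=x_{1,2}=y$ the polynomial is affine in $y$: writing $M_{1,j}$ for the $(1,j)$-minor (the permanent of $X$ with row $1$ and column $j$ deleted), one gets $\perm_S|_{y=0}=\sum_{j\ge 3}x_{1,j}M_{1,j}$ and $\perm_S|_{y=1}=M_{1,1}+M_{1,2}+\sum_{j\ge 3}x_{1,j}M_{1,j}$. Hence
\[
\Sigma_{\{x_{1,1},x_{1,2}\}}\perm_n \;=\; M_{1,1}+M_{1,2}+2\sum_{j\ge 3}x_{1,j}M_{1,j},
\]
which, using linearity of the permanent in a fixed row, is exactly the permanent of the $n\times n$ matrix $\tilde X$ equal to $X$ but with its first row replaced by $(1,1,2x_{1,3},\dots,2x_{1,n})$. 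The only obstruction to $\tilde X$ being a legal matrix (entries in $\X\cup\FF$) is the constant $2$ multiplying the variables $x_{1,j}$.

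Then I would remove this obstruction with three extra vertices $a,b,c$, building $E\in(\X\cup\FF)^{(n+3)\times(n+3)}$ as follows. Keep rows and columns $2,\dots,n$ as in $X$; in the first row keep the legal entries $E_{1,1}=E_{1,2}=1$, set $E_{1,j}=0$ for $j\ge 3$, and add a single edge $E_{1,a}=1$ into the gadget; move the original weights onto gadget edges via $E_{c,j}=x_{1,j}$ for $j\ge 3$; and choose the internal $3\times 3$ block on $\{a,b,c\}$ (using the constant $1/2\in\FF$, available since $\mathrm{char}\,\FF\neq 2$) so that this block has permanent $1$, while forcing that whenever column $a$ is filled from row $1$ the gadget must deliver exactly one weight $x_{1,j}$ to a core column and admits exactly two internal completions of total weight $2$. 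Finally I would verify $\perm(E)=M_{1,1}+M_{1,2}+2\sum_{j\ge 3}x_{1,j}M_{1,j}$ by a case analysis on the column matched to row $1$: matching to column $1$ or $2$ leaves the gadget to cover itself with factor $1$ and the core rows to realize $M_{1,1}$ or $M_{1,2}$, respectively; matching to column $a$ forces (by a counting/balance argument) a unique core column $j\ge 3$ to be vacated for the exit edge $E_{c,j}$, contributing $x_{1,j}M_{1,j}$ with multiplicity $2$.

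The main obstacle is the gadget design in the third step: the factor of $2$ must attach \emph{only} to the degree-$\ge 3$ contributions and not to the whole polynomial, so a disjoint ``doubling'' block of permanent $2$ will not suffice, and the doubling must instead be entangled with the routing while using just three auxiliary vertices. I expect that checking the forced balance of the gadget --- that entering it from row $1$ compels exactly one exit edge $E_{c,j}$ to fire and excludes all spurious cycle covers --- is where the verification demands the most care.
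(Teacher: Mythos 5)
There is a genuine gap, and it starts before the proof does: the printed statement contains a typo, and you proved the wrong (easy) version of the claim. The paper's own proof substitutes $x_{1,1}=x_{2,2}=b$, the companion determinant claim (Claim~\ref{clm:det}) concerns $\Sigma_{\{x_{1,1},x_{2,2}\}}\det_n$, and the remark immediately after the two claims says that when the two indeterminates share a row or column the claim is ``trivially true.'' So the intended statement is $\Sigma_{\{x_{1,1},x_{2,2}\}}\perm = \perm(E)$, with the two variables in \emph{distinct rows and columns}; that is the case needed to conclude $\alpha_\perm(2)\le 3$ (which quantifies over all $2$-subsets $S$) and the case that contrasts with the determinant. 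Your entire argument rests on the observation that the two summed variables ``never occur together in a single monomial,'' so that the polynomial is affine in $y$; this fails precisely in the intended diagonal case, where monomials containing $x_{1,1}x_{2,2}$ produce a genuine $y^2$ term. The paper handles that case via the identity $\Sigma_{\{x_{1,1},x_{2,2}\}}\perm = 2\perm(X[0]) + \perm(X[0]_{-\{1\}}) + \perm(X[0]_{-\{2\}}) + \perm(X[0]_{-\{1,2\}})$ (where $X[0]$ sets both variables to $0$ and $-S$ deletes the rows and columns in $S$), and then exhibits an explicit $3\times 3$ block $U$, glued to $X[0]$ by two unit entries, whose sub-permanents are tuned so that $\perm(U)=2$, $\perm(U_{\{1,2\}}^{\{1,2\}})=\perm(U_{\{1,3\}}^{\{1,3\}})=\perm(U_{\{1\}}^{\{1\}})=1$, and $\perm(U_{\{1,2\}}^{\{1,3\}})=\perm(U_{\{1,3\}}^{\{1,2\}})=0$, producing exactly this four-term combination. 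Nothing in your construction generates the three minor terms $\perm(X[0]_{-\{1\}})$, $\perm(X[0]_{-\{2\}})$, $\perm(X[0]_{-\{1,2\}})$, so your approach does not adapt to the statement the paper actually proves and uses.

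Even judged purely as a proof of the literal same-row statement, the proposal is incomplete and partly misguided. First, the $3$-vertex routing gadget that carries the whole construction is never written down: you list the properties it must have (permanent $1$ internally, exactly two completions of total weight $2$ when entered from row $1$, no spurious cycle covers) and acknowledge that verifying them ``demands the most care,'' but that verification \emph{is} the proof, and it is missing. Second, the obstacle you invoke to justify the gadget --- that ``a disjoint doubling block of permanent $2$ will not suffice'' because the factor $2$ must attach only to the $x_{1,j}$ terms --- is illusory: instead of doubling only some terms, halve the others. Take $\hat X$ equal to $X$ with first row $(\tfrac12,\tfrac12,x_{1,3},\dots,x_{1,n})$ (legal since $\mathrm{char}\,\FF\neq 2$) and set $E=\left[\begin{smallmatrix} 2 & 0\\ 0 & \hat X\end{smallmatrix}\right]$; then
\begin{equation*}
\perm(E) \;=\; 2\,\perm(\hat X) \;=\; M_{1,1}+M_{1,2}+2\sum_{j\ge 3}x_{1,j}M_{1,j} \;=\; \Sigma_{\{x_{1,1},x_{1,2}\}}\perm_n ,
\end{equation*}
an $(n+1)\times(n+1)$ matrix with entries in $\X\cup\FF$ and no gadget at all. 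This is exactly why the paper dismisses the same-row case as trivial: the substance of Claim~\ref{clm:per} lives entirely in the diagonal case, which your proposal does not address.
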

Observe that, do to row and column swaps (and the fact that, if the two indeterminants were on the same row or column, the claim would be trivially true), these claims imply what we want.

\subsection{Lower Bound on Determinantal Boolean Sums}
We now prove that $\alpha_\det(2) = \infty$ by proving Claim \ref{clm:det}.
The proof relies on a mechanism 
developed by Mignon and Ressayre to lower bound the determinantal complexity~\cite{mignon2004quadratic}.
For a polynomial $f \in \F[\X]$,
denote by $H_f$ the $|\X| \times |\X|$ Hessian matrix of $f$
defined by
$$(H_f)_{x,x'} = \frac{\partial^2}{\partial x \partial x'} f.$$

\begin{lemma}[\cite{mignon2004quadratic}; see also Lemma 13.3 in~\cite{chen2011partial}]
Let $f$ be a polynomial in the variables $x = (x_1,\ldots,x_n)$
and let $a \in \F^n$ be so that $f(a)=0$ 
then $$\dc(f) \geq \frac{\mathsf{rank}(H_f|_{x=a})}{2}.$$
\end{lemma}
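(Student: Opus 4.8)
The final statement is the Mignon--Ressayre Hessian bound, so my plan is the standard second-derivative argument, specialized to the determinantal setting.

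The plan is to reduce the Hessian of $f$ to the Hessian of the determinant \emph{function}. Since $\dc(f) = k$, the polynomial $f$ is a projection of $\det_k$, so (absorbing the nonzero scalar $\lambda$, which only scales $H_f$ and hence preserves its rank) we may write $f(x) = \det(A(x))$ where $A(x) = A_0 + \sum_{i} x_i A_i$ is an affine matrix-valued map with constant $k\times k$ coefficient matrices $A_0, A_1, \ldots, A_n$ (a projection is the special case where each entry is a single variable or a constant). Writing $B := A(a)$, $E(u) := \sum_i u_i A_i$, and Taylor-expanding $\det$ about $B$, the only contribution to the quadratic-in-$u$ part of $f(a+u) = \det(B + E(u))$ comes from the Hessian term of $\det$ at $B$, since $E(u)$ is linear in $u$ and all higher Taylor terms of $\det$ are of order $\ge 3$ in $u$. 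This yields
\[
H_f|_{x=a} = J^{\top}\,\mathcal{H}_B\, J,
\]
where $J$ is the constant $k^2 \times n$ matrix whose columns are the flattened $A_i$, and $\mathcal{H}_B$ is the $k^2\times k^2$ Hessian of the map $\det\colon \F^{k\times k}\to\F$ at $B$. In particular $\mathsf{rank}(H_f|_{x=a}) \le \mathsf{rank}(\mathcal{H}_B)$, so it suffices to show that the Hessian of $\det_k$ at any \emph{singular} matrix $B$ has rank at most $2k$.

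Next I would normalize $B$. Since $f(a) = \det B = 0$, the matrix $B$ is singular; let $s \ge 1$ be its corank and $r = k - s$. Using $\det(PBQ) = \det(P)\det(Q)\det(B)$ for invertible $P,Q$, the form $\mathcal{H}_{PBQ}$ is a nonzero scalar times a congruence of $\mathcal{H}_B$, so its rank is unchanged; thus we may assume $B = \mathrm{diag}(I_r, 0)$. I would then compute $\mathcal{H}_B$ entrywise via the classical second-cofactor formula: the mixed partial $\partial^2 \det / \partial B_{pq}\,\partial B_{rs}$ vanishes when $p = r$ or $q = s$ (two entries sharing a row or column, by multilinearity of $\det$), and otherwise equals $\pm$ the complementary $(k-2)\times(k-2)$ minor of $B$ obtained by deleting rows $p,r$ and columns $q,s$.

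The heart of the argument is a corank count on these complementary minors. For $B = \mathrm{diag}(I_r,0)$, such a minor is nonzero only when the two deleted rows equal the two deleted columns and together they cover all $s$ zero-indices $\{r+1,\dots,k\}$, which forces $s \le 2$. This gives three cases: if $s \ge 3$ then $\mathcal{H}_B = 0$ and the rank is $0$; if $s = 2$ the only surviving couplings are the hyperbolic pairs $(B_{k-1,k-1},B_{kk})$ and $(B_{k-1,k},B_{k,k-1})$, giving rank $4$; and if $s = 1$ the surviving couplings are $(B_{kk},B_{rr})$ for $r \le k-1$ (an ``arrowhead'' block of rank $2$) together with the $k-1$ independent hyperbolic pairs $(B_{kr},B_{rk})$, for a total rank $2 + 2(k-1) = 2k$. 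In every case $\mathsf{rank}(\mathcal{H}_B) \le 2k$, and combined with the displayed inequality this gives $\mathsf{rank}(H_f|_{x=a}) \le 2k = 2\,\dc(f)$, which is the claimed bound. The main obstacle is the $s = 1$ case, where the rank is largest and exactly $2k$: one must check that the diagonal couplings contribute only rank $2$ (not $k$), since $B_{kk}$ pairs with every $B_{rr}$ but the $B_{rr}$ do not pair among themselves, and that the off-diagonal pairs decouple cleanly from each other and from the diagonal block. This computation is also where the hypothesis $\mathrm{char}(\F)\neq 2$ enters, both to identify $H_f$ with twice the quadratic part and to keep the $2\times 2$ hyperbolic blocks nondegenerate; the remaining steps (the affine chain-rule identity and congruence-invariance of the Hessian rank) are routine.
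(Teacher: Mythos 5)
The paper itself does not prove this lemma; it is quoted with a citation to Mignon--Ressayre (and to Lemma 13.3 of the partial-derivatives survey), so there is no internal proof to compare against. Your argument is a correct reconstruction of exactly that standard proof: write $f$ (up to the harmless scalar $\lambda$) as $\det_k$ composed with an affine map, use the chain rule to get $H_f|_{x=a} = J^{\top}\mathcal{H}_B J$ where $\mathcal{H}_B$ is the Hessian of $\det_k$ at the singular matrix $B=A(a)$, normalize $B$ to $\mathrm{diag}(I_r,0)$ by congruence-invariance of the rank, and bound $\mathsf{rank}(\mathcal{H}_B)\le 2k$ via the corank case analysis (corank $\ge 3$ gives rank $0$, corank $2$ gives rank $4$, corank $1$ gives the arrowhead block of rank $2$ plus $k-1$ hyperbolic pairs, totaling $2k$). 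All steps, including the second-cofactor formula for the entries of $\mathcal{H}_B$ and the decoupling checks in the corank-$1$ case, are sound, so no corrections are needed.
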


The theorem follows by considering the polynomial
$$f = \Sigma_{\{x_{1,1},x_{2,2}\}} \det_n$$ and
the $n \times n$ matrix
$$A = \left[ 
\begin{array}{ccc}
 & \frac{1}{2} & \\ 
1 &  &  \\
 &  & I \\
\end{array}
\right] , $$
where $I$ is the $(n-2) \times (n-2)$ identity matrix. 
In the two claims below (which complete the proof), let $H = H_f$.

\begin{claim}
\label{clm:0}
$f(A)=0$.
\end{claim}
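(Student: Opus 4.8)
The plan is to unfold the boolean sum and reduce the whole computation to two $2\times 2$ determinants. By the definition (\ref{form:bool_sum}), the summed variables $x_{1,1}$ and $x_{2,2}$ are first identified to a common variable $y$ and then evaluated at $y=0$ and $y=1$. Writing $A^{(t)}$ for the matrix obtained from $A$ by placing the value $t$ in both the $(1,1)$ and the $(2,2)$ entries, we therefore have
\[
f(A) = \det(A^{(0)}) + \det(A^{(1)}),
\]
so it suffices to evaluate these two determinants.

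Next I would exploit the block structure of $A$. By construction the only nonzero entries of $A$ sit in the top-left $2\times 2$ corner---namely $A_{1,2}=\frac{1}{2}$ and $A_{2,1}=1$---together with the diagonal of the trailing $(n-2)\times(n-2)$ identity block; in particular no entry links the first two coordinates to the remaining $n-2$ ones. Hence for each $t\in\{0,1\}$ the matrix $A^{(t)}$ is block diagonal, splitting into the $2\times 2$ block
\[
\left(\begin{array}{cc} t & \frac{1}{2} \\ 1 & t \end{array}\right)
\]
and the $(n-2)\times(n-2)$ identity. Multiplicativity of the determinant across the blocks gives $\det(A^{(t)}) = (t^2 - \frac{1}{2})\cdot 1 = t^2 - \frac{1}{2}$.

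Finally I would substitute and add: $\det(A^{(0)}) = -\frac{1}{2}$ and $\det(A^{(1)}) = \frac{1}{2}$, so that $f(A) = -\frac{1}{2} + \frac{1}{2} = 0$, as claimed. This makes transparent the role of the entry $\frac{1}{2}$: it is chosen precisely so that the two diagonal contributions $t^2-\frac{1}{2}$, at $t=0$ and $t=1$, cancel.

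I do not expect a genuine obstacle here---once the sum is correctly unfolded this is essentially a one-line determinant calculation. The only point that needs care is the bookkeeping in the unfolding: one must remember that $f$ no longer depends on $x_{1,1}$ or $x_{2,2}$, so that evaluating $f$ at $A$ means substituting the remaining entries of $A$ while the $(1,1)$ and $(2,2)$ positions carry the summation values $0$ and $1$ rather than any entries of $A$ itself. The substantive difficulty of the theorem lies not in this claim but in the companion estimate bounding the rank of the Hessian $H=H_f$ at the point $A$.
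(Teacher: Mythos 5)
Your proposal is correct and follows essentially the same route as the paper: unfold $\Sigma_{\{x_{1,1},x_{2,2}\}}\det_n$ at $A$ into the two determinants $\det(A^{(0)})+\det(A^{(1)})$, use the block-diagonal structure to reduce to the $2\times 2$ corner $\bigl(\begin{smallmatrix} t & 1/2 \\ 1 & t \end{smallmatrix}\bigr)$, and observe that $(0^2-\tfrac12)+(1^2-\tfrac12)=0$. Your remark on the bookkeeping---that $f$ no longer depends on $x_{1,1},x_{2,2}$, so those entries of $A$ are irrelevant---is exactly the right point of care and matches the paper's conventions.
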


\begin{proof}[Proof of Claim~\ref{clm:0}]
\begin{align*}
f(A)
& = \det \left( 
\left[ 
\begin{array}{ccc}
 & \frac{1}{2} & \\ 
1 &  &  \\
 &  & I \\
\end{array}
\right]  \right) 
+ 
\det \left( 
\left[ 
\begin{array}{ccc}
1 & \frac{1}{2} & \\ 
1 & 1 &  \\
 &  & I \\
\end{array}
\right]  \right) \\
& = - \frac{1}{2}+ 1 - \frac{1}{2} = 0. \qedhere
\end{align*}
\end{proof}

\begin{claim}
\label{clm:1}
$\mathsf{rank}(H|_{X=A}) \geq 4(n-2)$. 
\end{claim}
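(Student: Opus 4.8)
The plan is to reduce the computation of $H|_{X=A}$ to two applications of the classical formula for the Hessian of the determinant at an invertible point, and then to exhibit an explicit nonsingular submatrix of size $4(n-2)$.

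First I would rewrite $f$. Since $\det_n$ is multilinear in the entries $x_{1,1}$ and $x_{2,2}$, the definition of the boolean sum gives
\[
f = \Sigma_{\{x_{1,1},x_{2,2}\}}\det_n = \det_n\big|_{x_{1,1}=x_{2,2}=0} + \det_n\big|_{x_{1,1}=x_{2,2}=1},
\]
exactly the two matrices appearing in the proof of Claim~\ref{clm:0}. Thus $f$ is a sum of two determinants whose $(1,1)$ and $(2,2)$ entries are fixed constants and whose other entries are the free variables. Evaluating at $X=A$, the first matrix is $A$ itself and the second is the matrix $A'$ obtained from $A$ by putting $1$'s in positions $(1,1)$ and $(2,2)$. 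Both are invertible, with $\det A = -\tfrac12$ and $\det A' = \tfrac12$, and both inverses are block diagonal with identity in the lower-right $(n-2)\times(n-2)$ block; their top-left $2\times2$ blocks are $\left[\begin{array}{cc}0&1\\2&0\end{array}\right]$ and $\left[\begin{array}{cc}2&-1\\-2&2\end{array}\right]$ respectively, which I would record explicitly.

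Next I would invoke Jacobi's formula: for an invertible $P$,
\[
\frac{\partial^2 \det}{\partial x_{a,b}\,\partial x_{c,d}} = \det(P)\big[(P^{-1})_{b,a}(P^{-1})_{d,c} - (P^{-1})_{b,c}(P^{-1})_{d,a}\big].
\]
Because partial differentiation commutes with fixing the entries $x_{1,1},x_{2,2}$, the Hessian of each of the two determinants above (taken in the free variables and evaluated at $A$) is the restriction of this formula, with $P=A$ for the first term and $P=A'$ for the second. Hence $H|_{X=A}$ is, entrywise, the sum of these two expressions with $P^{-1}=A^{-1}$ and $P^{-1}=(A')^{-1}$. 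I would then restrict $H$ to the $4(n-2)$ variables $x_{1,k},x_{2,k},x_{k,1},x_{k,2}$ for $k=3,\dots,n$. Using the block-diagonal form of the two inverses, every entry coupling distinct indices $k\neq k'$ vanishes (it would require an off-diagonal entry of the identity block), so the submatrix is block diagonal over $k$. For a fixed $k$, the within-group entries among $\{x_{1,k},x_{2,k}\}$ and among $\{x_{k,1},x_{k,2}\}$ vanish, while the cross-group $2\times2$ coupling works out to $M_k = \left[\begin{array}{cc}-1&2\\1&-1\end{array}\right]$, which is nonsingular since $\det M_k = -1$. Each $4\times4$ diagonal block is therefore the anti-block-diagonal matrix $\left[\begin{array}{cc}0&M_k\\M_k^{\top}&0\end{array}\right]$, which is nonsingular, so the whole $4(n-2)\times 4(n-2)$ submatrix is nonsingular and $\mathsf{rank}(H|_{X=A})\geq 4(n-2)$.

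The main obstacle is the bookkeeping in the final step: correctly evaluating the two inverses, tracking the row/column index conventions in Jacobi's formula, and verifying that the cross-$k$ terms genuinely cancel so that only the $2\times2$ blocks $M_k$ survive. The entire bound is ultimately driven by the single fact that $\det M_k \neq 0$; I would double-check a representative entry of each type (e.g. the couplings $x_{1,k}\!-\!x_{k,2}$ and $x_{2,k}\!-\!x_{k,1}$) to pin down the signs, since these determine $M_k$ and hence the conclusion.
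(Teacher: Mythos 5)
Your proposal is correct, and it follows the same skeleton as the paper's proof: the same decomposition $f = f_0 + f_1$ into the determinants at $x_{1,1}=x_{2,2}=0$ and $x_{1,1}=x_{2,2}=1$, the same choice of the $4(n-2)$ variables $x_{1,k},x_{2,k},x_{k,1},x_{k,2}$, the same block-diagonal reduction, and the same final $2\times 2$ coupling matrix $\left[\begin{smallmatrix}-1 & 2\\ 1 & -1\end{smallmatrix}\right]$ of determinant $-1$. The only genuine difference is the tool used to evaluate Hessian entries. The paper uses the fact that an entry of the Hessian of $\det$ is $\pm$ the determinant of the $(n-2)$-minor obtained by deleting the corresponding two rows and columns (zero if the variables share a row or column); the cross-block vanishing then comes from spotting a zero row in each relevant minor of $A$. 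You instead invoke Jacobi's formula, expressing every entry as $\det(P)\bigl[(P^{-1})_{b,a}(P^{-1})_{d,c} - (P^{-1})_{b,c}(P^{-1})_{d,a}\bigr]$ at the two evaluation points $P=A$ and $P=A'$, so that all vanishing statements fall out mechanically from the block-diagonal form of $A^{-1}$ and $(A')^{-1}$ (your recorded inverses and the resulting entries all check out against the paper's values). Your route buys more systematic bookkeeping — one formula handles every entry uniformly — but it hinges on the invertibility of $A$ and $A'$, which is a fortunate feature of this particular evaluation point; the paper's minor-based computation is more elementary and would survive at a singular evaluation point, which is a mild robustness advantage of their version.
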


\begin{proof}[Proof of Claim~\ref{clm:1}]
Focus on the $4(n-2)$ variables \(\mathcal{V}_i =\{x_{1,i}, x_{2,i}, x_{i,1}, x_{i,2}\}\) for $i>2$.
Furthermore, we consider \(G\) to be the $4(n-2) \times 4(n-2)$ sub-matrix of \(H\) whose rows and columns are labeled by variables in \(\bigcup_{i>2} \mathcal{V}_i\).
Let $f_0 = \det_n|_{x_{1,1}=x_{2,2} = 0}$
and $f_1 = \det_n|_{x_{1,1}=x_{2,2} = 1}$,
so that $f = f_0+f_1$.

An entry in the Hessian of $\det$ is plus/minus the determinant of the \((n-2)\)-minor
obtained by deleting the corresponding rows and columns from $X$
(or zero if the two variables share a row or a column). 

For convenience, denote by $H_{k\ell mp}$ the $(x_{k,\ell},x_{m,p})$-entry in $H$ (and in \(G\)).

We first claim that after substituting $A$,
the sub-matrix \(G\) of $H$ has $n-2$ blocks, each of size $4 \times 4$:
$$\left[ 
\begin{array}{cccc}
G_3 &  & & \\ 
 & G_4 &   & \\
 &  & \ddots & \\ 
&  & & G_n \\ 
\end{array}
\right]$$
where for $i > 2$
$$G_i = \left[ 
\begin{array}{cccc}
H_{1i1i} & H_{1i2i} & H_{1ii1} & H_{1ii2}\\ 
H_{2i1i} & H_{2i2i} & H_{2ii1} & H_{2ii2}\\ 
H_{i11i} & H_{i12i} & H_{i1i1} & H_{i1i2}\\ 
H_{i21i} & H_{i22i} & H_{i2i1} & H_{i2i2}\\ 
\end{array}
\right].$$
Indeed, for $i \neq j$, the matrix obtained from $A$ by deleting row $i$
and column $j$ has a zero row
so its determinant is zero. 
It follows for \(k,\ell \in \{1,2\}\) that
$$(H_f)_{ik\ell j}|_{X=A}
= (H_{f_0})_{ik\ell j}|_{X=A} + (H_{f_1})_{ik\ell j}|_{X=A}
= 0 + 0 = 0.$$
A similar argument holds for the other entries $H_{ikj\ell}$, $H_{ki\ell j}$, and $H_{kij\ell}$. 

Now fix $i$ and focus on $G_i$. 
Entries of the form $1i1i$ and $1i2i$ are also zero.
So, we are left with 
$$G_i = \left[ 
\begin{array}{cccc}
 &  & H_{1ii1} & H_{1ii2}\\ 
 &  & H_{2ii1} & H_{2ii2}\\ 
H_{i11i} & H_{i12i} & & \\ 
H_{i21i} & H_{i22i} &  & \\ 
\end{array}
\right].$$
The matrix $G_i$ is symmetric so it is enough to understand 
$$\left[ 
\begin{array}{cc}
 H_{1ii1} & H_{1ii2}\\ 
H_{2ii1} & H_{2ii2}\\ 
\end{array}
\right].$$
This matrix is a sum of two matrices (one from $f_0$ and one from $f_1$).
For $f_0$, this matrix is 
$$\left[ 
\begin{array}{cc}
 & 1 \\ 
\frac{1}{2} &  \\ 
\end{array}
\right].$$
For $f_1$, this matrix is 
$$\left[ 
\begin{array}{cc}
-1 & 1 \\ 
\frac{1}{2} &  -1 \\ 
\end{array}
\right].$$
The sum of the two matrices is  
$$\left[ 
\begin{array}{cc}
-1 & 2 \\ 
1 & -1 \\ 
\end{array}
\right],$$
which always has rank two.
Rolling back
\begin{equation*}
\mathsf{rank} \left( \left[ 
\begin{array}{cccc}
G_3 &  & & \\ 
 & G_4 &   & \\ 
 &  & \ddots & \\ 
&  & & G_n \\ 
\end{array}
\right]\right) = 4(n-2). \qedhere
\end{equation*}
\end{proof}

\begin{obs}
We note that this result also applies in the border complexity setting.
Specifically, it also follows that $\underline{\dc}(\Sigma_{\{x_{1,1}, x_{2,2}\}}\det_n) \ge 2(n-2)$.
This fact follows from \cite{grochow2015bordermr}, where it is shown that the lower bound from \cite{mignon2004quadratic} of the determinantal complexity of the permanent also applies in the border complexity setting.
Although the proof presented there is specifically for the permanent polynomial, it is easy to see that it also works if you use any arbitrary homogeneous polynomial.
We should observe that $\Sigma_{\{x_{1,1}, x_{2,2}\}}\det_n$ is not homogeneous, but one can easily see that we can apply the result to the natural homogenization of the polynomial and get the desired result.
\end{obs}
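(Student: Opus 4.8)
The plan is to run the Mignon--Ressayre argument of Claim~\ref{clm:1} inside the border/orbit-closure framework, using \cite{grochow2015bordermr} as a black box after two adjustments: generalizing their statement from $\perm$ to an arbitrary homogeneous polynomial, and homogenizing $f := \Sigma_{\{x_{1,1},x_{2,2}\}}\det_n$, which is not itself homogeneous. Concretely, the version of \cite{grochow2015bordermr} I would isolate is the following: for every homogeneous degree-$k$ polynomial $p$ lying in the border closure of $\det_k$ and every zero $a$ of $p$, one has $\mathsf{rank}(H_p|_{x=a}) \le 2k$; equivalently, the Hessian-rank-at-a-zero bound underlying the Claim survives degeneration. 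Their proof is phrased for $\perm$, but it uses only homogeneity and the lower semicontinuity of matrix rank, so I would argue it applies to any homogeneous $p$.

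Next I would introduce a fresh variable $z$ and let $\hat f(\X,z) = \sum_d z^{n-d} f_d$ be the homogenization of $f$ to degree $n$ (the top degree, realized by $f_0 = \det_n|_{x_{1,1}=x_{2,2}=0}$), where $f_d$ denotes the degree-$d$ part of $f$. Unwinding the border definition, if $k = \underline{\dc}(f)$ is witnessed by $F = (\epsilon^{M} f + \epsilon^{M+1} g)/(1+\epsilon Q)$ with $\dc_{\FF(\epsilon)}(F) = k$, then homogenizing in the $\X$-variables the size-$k$ determinant representing $F$ (replacing each $\FF(\epsilon)$-constant entry by that constant times $z$) and letting $\epsilon \to 0$ exhibits the padded polynomial $z^{k-n}\hat f$ in the border closure of $\det_k$. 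I would then take the point $\hat A := (A, z=1)$, with $A$ the matrix of Section~\ref{sec:det}. Since dehomogenizing at $z=1$ returns $f$, Claim~\ref{clm:0} gives $z^{k-n}\hat f(\hat A) = \hat f(A,1) = f(A) = 0$, so $\hat A$ is a genuine zero of the padded polynomial, which is exactly what Mignon--Ressayre needs.

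Because both $1+\epsilon Q$ and the factor $z^{k-n}$ are constant in $\X$, the $\X \times \X$ block of the Hessian of $z^{k-n}\hat f$ at $\hat A$ equals $\partial^2 f/\partial x\,\partial x'|_{\X=A}$ — the matrix $H_f|_A$ of Claim~\ref{clm:1} — since $\partial^2 \hat f/\partial x\,\partial x'|_{z=1} = \partial^2 f/\partial x\,\partial x'$ for all $x,x' \in \X$. As the rank of a matrix is at least that of any principal submatrix, $\mathsf{rank}(H_{z^{k-n}\hat f}|_{\hat A}) \ge \mathsf{rank}(H_f|_A) \ge 4(n-2)$ by Claim~\ref{clm:1}. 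The generalized \cite{grochow2015bordermr} bound then yields $2k = 2\underline{\dc}(f) \ge \mathsf{rank}(H_{z^{k-n}\hat f}|_{\hat A}) \ge 4(n-2)$, that is, $\underline{\dc}(\Sigma_{\{x_{1,1},x_{2,2}\}}\det_n) \ge 2(n-2)$, which in turn forces $\underline{\alpha_\det}(2) = \infty$. I expect the main obstacle to be the first-paragraph bookkeeping: verifying that the $\epsilon$-fraction definition of $\underline{\dc}$ used here corresponds cleanly to membership of the padded homogenization $z^{k-n}\hat f$ in the border closure of $\det_k$ (in particular handling the degrees, the power of $z$, and the degenerate regime $k < n$), and checking carefully that the degeneration/semicontinuity argument of \cite{grochow2015bordermr}, written there only for $\perm$, indeed goes through for the arbitrary homogeneous polynomial $z^{k-n}\hat f$.
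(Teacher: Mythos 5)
Your proposal is correct and is essentially the paper's own argument: the paper's observation is precisely this proof in compressed form --- invoke Grochow's border version of the Mignon--Ressayre bound, note that its proof works for an arbitrary homogeneous polynomial rather than just the permanent, and apply it to the natural (padded) homogenization of $\Sigma_{\{x_{1,1},x_{2,2}\}}\det_n$. The details you supply --- the $z$-padding of the $\FF(\epsilon)$-determinantal representation, the observation that the regime $k<n$ is automatically excluded, the choice of the zero $(A,1)$, and the fact that the $\X\times\X$ block of the Hessian at $z=1$ equals $H_f|_{X=A}$, which has rank at least $4(n-2)$ by Claim~\ref{clm:1} --- are exactly the bookkeeping the paper leaves implicit.
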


\subsection{Upper Bound on Permanental Boolean Sum}
In this section, we will prove that $\alpha_\perm(2) \le 3$ by proving Claim \ref{clm:per}. Specifically, we will select the matrix $E$ to be
\begin{align*}
E =
\left[ 
\begin{array}{ccc|cccc}
1 & 1 & 1 &&&& \\ 
1 & 0 & -1 & 1 &&&  \\
-\tfrac{1}{2} & \tfrac{1}{2} & \tfrac{3}{2} &&1&&  \\
\hline
 & 1 &  & 0 &x_{1,2} &x_{1,3}& \cdots  \\
 &  &  1 & x_{2,1} & 0 &x_{2,3}&  \\
 &  &   & x_{3,1} & x_{3,2} &x_{3,3}&  \\
 & &  & \vdots & && \ddots 
\end{array}
\right].
\end{align*}

\begin{proof}[Proof of Claim \ref{clm:per}]
To see this, denote by $X[b]$ the matrix
$X$ after the substitution $x_{1,1}=x_{2,2} =b$ for $b \in \{0,1\}$.
For $S \subseteq \{1,2\}$, denote by $X[b]_{-S}$
the matrix $X[b]$ after deleting the rows and columns in $S$.
We have
\begin{align*}
\perm(X[1])
& = \perm \left( \left[ 
\begin{array}{cccc}
1 & x_{1,2} &x_{1,3}&  \\
 x_{2,1} & 1 &x_{2,3}&  \\
 x_{3,1} & x_{3,2} &x_{3,3}&  \\
& && \ddots \\
\end{array}
\right] \right) \\ 
& = 
\perm(X[0])
+ \perm(X[0]_{-\{1\}})
+ \perm(X[0]_{-\{2\}})
+ \perm(X[0]_{-\{1,2\}}) .
\end{align*}

Now, consider a permutation $\pi$ such that the associated value of the permutation on $E$ is nonzero. We can then notice that such a permutation falls into one of four categories:
\begin{itemize}
    \item $\pi({\{1, 2, 3\}}) = \{1,2,3\}$,
    \item $\pi(2) = 4$ and $\pi({\{1, 3\}}) = \{1,3\}$,
    \item $\pi(3) = 5$ and $\pi({\{1, 2\}}) = \{1,2\}$, or
    \item $\pi(2) = 4$, $\pi(3) = 5$, and $\pi(1) = 1$.
\end{itemize}
Let $U$ be the $3 \times 3$ upper-left block
\begin{align*}
    U = \begin{bmatrix}
        1 & 1 & 1 \\
        1 & 0 & -1 \\
        -\frac{1}{2} & \frac{1}{2} & \frac{3}{2}
    \end{bmatrix}.
\end{align*}
For $S,T \subseteq \{1,2,3\}$ with $\lvert S\rvert = \lvert T\rvert$, we denote by $U_S^T$ the sub-matrix of $U$ obtained by keeping the rows of $S$ and the columns of $T$. The matrix $U$ is chosen such that
\begin{align*}
    & \perm(U_{\{1,2\}}^{\{1,3\}}) = \perm(U_{\{1,3\}}^{\{1,2\}}) = 0, \\
    & \perm(U_{\{1\}}^{\{1\}}) = \perm(U_{\{1,2\}}^{\{1,2\}}) = \perm(U_{\{1,3\}}^{\{1,3\}}) = 1, \\ 
    & \perm(U) = 2.
\end{align*}

In the first of the above 4 cases, it must be that $\pi({\{4, 5, \ldots, m\}}) = \{4,5,\ldots, m\}$. Therefore, the sum of the weights of these permutations corresponds to
\[
\perm \left( \left[
    \begin{array}{c|c}
      U & \\
      \hline
       & X[0]
    \end{array}
    \right] \right) = 2\perm(X[0]).
\]

In the second case, it must be that $\pi(4) = 2$ and $\pi({\{5, 6, \ldots, m\}}) = \{5,6,\ldots, m\}$. Thus, this case corresponds to
\[
\perm \left( \left[
    \begin{array}{c|c}
      U_{\{1,3\}}^{\{1,3\}} & \\
      \hline
       & X[0]_{-\{1\}}
    \end{array}
    \right] \right) = \perm(X[0]_{-\{1\}}).
\]

In the third case, we have that $\pi(5) = 3$ and $\pi({\{4, 6,7 ,\ldots, m\}}) = \{4,6,7,\ldots, m\}$. Thus, this case corresponds to
\[
\perm \left( \left[
    \begin{array}{c|c}
      U_{\{1,2\}}^{\{1,2\}} & \\
      \hline
       & X[0]_{-\{2\}}
    \end{array}
    \right] \right) = \perm(X[0]_{-\{2\}}).
\]

Finally, in the fourth case, it must be that $\pi(4), \pi(5) \in \{2, 3\}$ and $\pi({\{6,7 ,\ldots, m\}}) = \{6,7,\ldots, m\}$. This case then corresponds to
\[
\perm \left( \left[
    \begin{array}{c|c}
      U_{\{1\}}^{\{1\}} & \\
      \hline
       & X[0]_{-\{1,2\}}
    \end{array}
    \right] \right) = \perm(X[0]_{-\{1,2\}}).
\]

Summing it together,
\begin{equation*}
    \perm(E) = \perm(X[0]) + \perm(X[1]). \qedhere
\end{equation*}
\end{proof}

\subsection{Boolean Sum over a Single Variable}
Interestingly, considering boolean sums of two variables is necessary to separate $\perm$ from $\det$. This is becaues the boolean cost of removing one variable is the same for both polynomial sequences.

\begin{claim}
We have that $\alpha_\det(1) = \alpha_\perm(1) = 0$.
\end{claim}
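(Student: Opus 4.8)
The plan is to exploit the single structural feature that $\perm_n$ and $\det_n$ share and that makes the $s=1$ case degenerate: both are \emph{multilinear}, i.e.\ linear in each individual variable. Unpacking the definition, the statement $\alpha_f(1)=0$ asserts that for every $n$ and every $S \subseteq \X_n$ with $|S|\le 1$, the polynomial $\Sigma_S f_n$ is a \emph{simple} projection of $f_n$ itself, with no increase in index. The case $S=\emptyset$ is trivial, since then $\Sigma_\emptyset f_n = 2f_n$, which is a simple projection of $f_n$ (take $\lambda = 2$ and the identity substitution). So the real content is a single variable $S=\{x_{i,j}\}$, and by multilinearity I can write $f_n = x_{i,j}\,A + B$, where $A$ (a signed minor in the determinant case, an unsigned minor in the permanent case) and $B$ (the sum over the terms avoiding position $(i,j)$) are polynomials not involving $x_{i,j}$.

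With this decomposition the boolean sum is immediate: $\Sigma_{\{x_{i,j}\}} f_n = f_n|_{x_{i,j}=0} + f_n|_{x_{i,j}=1} = B + (A+B) = A + 2B$. The key observation is that this is exactly $2\cdot f_n|_{x_{i,j}=1/2}$, since $f_n|_{x_{i,j}=1/2} = \tfrac12 A + B$. This is a simple projection of $f_n$: the map $\phi$ sends $x_{i,j}$ to the constant $\tfrac12 \in \FF$ and fixes every other variable, and we scale by $\lambda = 2$. Each variable of $\X$ is the image of at most one variable (indeed $x_{i,j}$ is the image of none), so the projection is simple, and the index is unchanged. Hence $\alpha_f(1) \le 0$ for both $f = \perm$ and $f = \det$, and since $\alpha_f$ takes values in $\N \cup \{\infty\}$ it cannot be negative, giving $\alpha_\perm(1) = \alpha_\det(1) = 0$.

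There is essentially no hard step here; the only point requiring care is that the scalar $\tfrac12$ exists, which is guaranteed by the standing assumption that $\mathrm{char}(\FF) \neq 2$. What I would emphasize in writing this up is the contrast that makes the claim interesting: the whole reduction rests on the fact that a single variable can be summed out merely by rescaling and substituting one constant, a trick that is blind to the sign structure distinguishing $\det$ from $\perm$. It is precisely when $|S| = 2$ that this one-constant substitution no longer reproduces the boolean sum and one is forced into the genuine gadget constructions of Claims~\ref{clm:det} and~\ref{clm:per}, where the determinant and permanent finally diverge. I would therefore place this claim as a remark closing the section, framing $s=1$ as the boundary case below which the two families remain indistinguishable.
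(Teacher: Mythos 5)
Your proof is correct and is essentially the paper's own argument: the paper likewise observes that $\det(X)|_{x_{1,1}=0}+\det(X)|_{x_{1,1}=1} = 2\det(X)|_{x_{1,1}=1/2}$ (a simple projection with $\lambda=2$, using $\mathrm{char}(\FF)\neq 2$) and notes the identical calculation for the permanent; your multilinear decomposition $f = x_{i,j}A+B$ is just the abstract form of that computation. The only additions on your side are minor bookkeeping (the $S=\emptyset$ case, non-negativity of $\alpha_f$, and working with a general entry $x_{i,j}$ rather than $x_{1,1}$), none of which changes the route.
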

\begin{proof}
\begin{align*}
& \Sigma_{\{x_{1,1}\}} \det(X) \\
& = \det \left[ 
\begin{array}{ccccc}
 0 &  x_{1,2} &  x_{1,3} & \ldots &  x_{1,n} \\
x_{2,1} & x_{2,2} & x_{2,3} & \ldots & x_{2,n} \\
x_{3,1} & x_{3,2} & x_{3,3} & \ldots & x_{3,n} \\
\vdots & \vdots & \vdots & \ddots & \vdots \\
x_{n,1} & x_{n,2} & x_{n,3} & \ldots & x_{n,n} 
\end{array}
\right]
+ \det \left[ 
\begin{array}{ccccc}
1 &  x_{1,2} &  x_{1,3} & \ldots &  x_{1,n} \\
x_{2,1} & x_{2,2} & x_{2,3} & \ldots & x_{2,n} \\
x_{3,1} & x_{3,2} & x_{3,3} & \ldots & x_{3,n} \\
\vdots & \vdots &\vdots & \ddots & \vdots \\
x_{n,1} & x_{n,2} & x_{n,3} & \ldots & x_{n,n} 
\end{array}
\right] \\
& = 2 \det \left[ 
\begin{array}{ccccc}
1/2 & x_{1,2} & x_{1,3} & \ldots & x_{1,n} \\
x_{2,1} & x_{2,2} & x_{2,3} & \ldots & x_{2,n} \\
x_{3,1} & x_{3,2} & x_{3,3} & \ldots & x_{3,n} \\
\vdots & \vdots &\vdots & \ddots & \vdots \\
x_{n,1} & x_{n,2} & x_{n,3} & \ldots & x_{n,n} 
\end{array}
\right] .
\end{align*}
A similar calculation holds for the permanent. 
\end{proof}

\section{Two variables suffice for VNP-hardness}\label{sec:two_sum}
The aim of this section is to prove Lemma~\ref{lem:two_vars}
that shows that summation over two variables suffices for VNP-hardness. 
We prove it for the determinant
(the proof for the permanent is similar).  
The following construction
shows how to replace a boolean summation over $m$
variables by $m$ boolean summations over two variables. 

\begin{obs}
Let $f \in \F[\X]$ and $S = \{x_1,\ldots,x_m\} \subseteq \X$.
Let $y_1,\ldots,y_m$ be $m$ new variables, and define 
$$g(y) = \Big( \prod_{i \in [m]} y_i \Big) + \Big( \prod_{i \in [m]}(1-y_i) \Big) .$$
Then
$$\Sigma_{\{x_1,y_1\}} \Sigma_{\{x_2,y_2\}} \ldots \Sigma_{\{x_m,y_m\}} g f = \Sigma_S f.$$
Indeed, on the l.h.s.\ there is a sum over $2^m$ terms,
but $g$ picks exactly two of them that perfectly
agree with the r.h.s.\
\end{obs}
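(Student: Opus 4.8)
The plan is to unfold the iterated boolean summation on the left-hand side into a single sum over the Boolean hypercube, and then argue that the auxiliary polynomial $g$ acts as an indicator that retains only two of the $2^m$ resulting terms.

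First I would note that each operator $\Sigma_{\{x_i,y_i\}}$ substitutes one fresh variable $z_i$ for both $x_i$ and $y_i$ and then adds the evaluations at $z_i=0$ and $z_i=1$. Because distinct operators act on disjoint pairs of variables, they commute, and iterating them over $i=1,\ldots,m$ rewrites the left-hand side as the explicit sum
\[
\sum_{z\in\{0,1\}^m} g(z_1,\ldots,z_m)\cdot f(z_1,\ldots,z_m),
\]
where $f(z_1,\ldots,z_m)$ abbreviates $f$ with each $x_i$ set to $z_i$ (and every variable outside $S$ left untouched), and $g(z)$ denotes $g$ with each $y_i$ set to $z_i$. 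The point that makes this collapse clean is that each $y_i$ occurs only in $g$ and each $x_i$ only in $f$, so the identification $x_i=y_i=z_i$ distributes across the product $gf$ without interference.

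Next I would evaluate $g$ on the hypercube. For $z\in\{0,1\}^m$ the monomial $\prod_i z_i$ equals $1$ exactly when $z=(1,\ldots,1)$ and is $0$ otherwise, while $\prod_i(1-z_i)$ equals $1$ exactly when $z=(0,\ldots,0)$ and is $0$ otherwise. Hence $g(z)$ is the indicator of $\{(0,\ldots,0),(1,\ldots,1)\}$, so every term with $z$ not equal to the all-zeros or all-ones vector vanishes. The two surviving terms are $f(0,\ldots,0)+f(1,\ldots,1)$, which is exactly $\Sigma_S f$ by the definition of the boolean sum.

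I do not expect a genuine obstacle here; the one thing to check with care is that the $m$ summation operators may be applied in any order and merged into a single sum over $\{0,1\}^m$, which follows from their acting on disjoint variable pairs and from multiplication by $g$ commuting with substitutions into the $x_i$. The real content is the design of $g$: I would emphasize that $g$ need not be a Boolean function, only that its restriction to $\{0,1\}^m$ is the characteristic function of the two constant assignments, which suffices because every variable $g$ touches is evaluated at $0$ or $1$ by the enclosing boolean sums.
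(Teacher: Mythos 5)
Your proof is correct and takes essentially the same route as the paper: the paper's one-line justification ("the l.h.s.\ is a sum over $2^m$ terms, but $g$ picks exactly two of them") is precisely your argument of collapsing the iterated sums into a single sum over $\{0,1\}^m$ and noting that $g$ restricted to the hypercube is the indicator of the all-zeros and all-ones assignments. Your version just makes explicit the commutation of the disjoint summation operators and the identification $x_i=y_i=z_i$, which the paper leaves implicit.
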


The observation allows to replace a boolean sum over many variables
by several sums, each over just two.

\begin{obs}
Let $g$ be the polynomial defined above. 
Then, there is a $2m \times 2m$ matrix $A$ with entries in
$\{y_1,\ldots,y_m,0,1,-1\}$ so that 
each $y_i$ appears once in $A$ and
$$g = \det(A).$$
The matrix $A$ is the $m$-variate version of the following matrix:
$$ A = \left[ 
\begin{array}{cc|cc|cc|cc}
-1 & 1 & -1 & &&&&   \\
-1 & y_1 && &&&&  \\
\hline
&& -1 & 1  & -1 &&&   \\
&& -1 & y_2  &&&&  \\
\hline
&&&& -1 & 1  & -1 &  \\
&& && -1 & y_3  &&  \\
\hline
1 &&&& && -1 & 1 \\
&&&& && -1 & y_4 \\
\end{array}
\right] .$$

To see that \(\det(A)\) is the polynomial \(g\), it is easier to see the determinant as a signed sum of the weights of the cycles covers of the graph \(G_A\) whose \(A\) is the adjacency matrix (see Figure~\ref{fig:GA})
\begin{align*}
    \det(A) = \sum_{C \text{ cycles cover}} \left[ \prod_{\sigma \text{ cycle of } C} (-1)^{1+\text{length of }C}\rm{weight}(C) \right].
\end{align*}
The reader could find more details about this characterization by cycles covers in~\cite{mahajan1999determinant}. The only cycle which contains a dashed edge is exactly the union of the dashed edges. So the only cover that contains this cycle also contains the \(m\) \(y_i\)-loops. It corresponds to one cover of signed weight \(\left((-1)^{m+1}(-1)^{m-1}\right)\prod_{i=1}^m y_i = \prod_{i=1}^m y_i\). Otherwise, the covers contain no dashed edges and are given by \(m\) disjoint graphs. The signed weight is the product of them: \(\prod_i((-1)(-1)+(1)(-y_i)) = \prod_i(1-y_i)\). The sum of the covers gives exactly the polynomial \(g\).
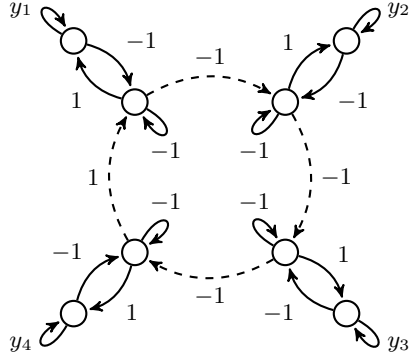
\begin{figure}
    \centering
\begin{tikzpicture}[->,>=stealth',shorten >=1pt,auto,node distance=2cm,
                    thick,main node/.style={circle,draw}]

  \node[main node] (1) {};
  \node[main node] (2) [right of=1] {};
  \node[main node] (3) [below of=2] {};
  \node[main node] (4) [below of=1] {};
  \node[main node] (1b) [above left of=1, xshift=0.6cm, yshift=-0.6cm] {};
  \node[main node] (2b) [above right of=2, xshift=-0.6cm, yshift=-0.6cm] {};
  \node[main node] (3b) [below right of=3, xshift=-0.6cm, yshift=0.6cm] {};
  \node[main node] (4b) [below left of=4, xshift=0.6cm, yshift = 0.6cm] {};

  \path[every node/.style={font=\sffamily\small}]
    (1) edge [bend left, dashed] node {$-1$} (2)
        edge [bend left] node {$1$} (1b)
        edge [out=330,in=300,looseness=15] node[below] {$-1$} (1)
    (1b) edge [bend left] node {$-1$} (1)
         edge [out=150,in=120,looseness=15] node[left] {$y_1$} (1b)
    (2) edge [bend left, dashed] node {$-1$} (3)
        edge [bend left] node {$1$} (2b)
        edge [out=240,in=210,looseness=15] node[below] {$-1$} (2)
    (2b) edge [bend left] node {$-1$} (2)
         edge [out=60,in=30,looseness=15] node[right] {$y_2$} (2b)  
    (3) edge [bend left, dashed] node {$-1$} (4)
        edge [bend left] node {$1$} (3b)
        edge [out=150,in=120,looseness=15] node[above] {$-1$} (3)
    (3b) edge [bend left] node {$-1$} (3)
         edge [out=330,in=300,looseness=15] node[right] {$y_3$} (3b)   
    (4) edge [bend left, dashed] node {$1$} (1)
        edge [bend left] node {$1$} (4b)
        edge [out=60,in=30,looseness=15] node[above] {$-1$} (4)
    (4b) edge [bend left] node {$-1$} (4)
         edge [out=240,in=210,looseness=15] node[left] {$y_4$} (4b);
\end{tikzpicture}
    \caption{The graph \(G_A\)}
    \label{fig:GA}
\end{figure}
\end{obs}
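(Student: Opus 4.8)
The plan is to evaluate $\det(A)$ through the cycle-cover characterization of the determinant that the paper already invokes, namely $\det(A) = \sum_{C} \prod_{\sigma \in C} (-1)^{1+\text{length}(\sigma)}\,\mathrm{weight}(\sigma)$, summed over cycle covers $C$ of the weighted digraph $G_A$. First I would fix convenient names for the $2m$ vertices: partition them into $m$ gadgets, gadget $i$ consisting of a \emph{first} vertex $a_i$ (row/column $2i-1$) and a \emph{second} vertex $b_i$ (row/column $2i$). Reading off the blocks, the only edges are: within gadget $i$, a self-loop of weight $-1$ at $a_i$, an edge $a_i \to b_i$ of weight $1$, an edge $b_i \to a_i$ of weight $-1$, and a self-loop of weight $y_i$ at $b_i$; and across gadgets, the \emph{dashed} edges $a_i \to a_{i+1}$ of weight $-1$ for $i<m$ together with the wrap-around $a_m \to a_1$ of weight $1$. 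Crucially, every dashed edge starts and ends at a first vertex, and together they form one directed cycle through $a_1,\dots,a_m$.

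The argument then splits cycle covers according to whether they use any dashed edge. I would first handle the covers using \emph{no} dashed edge: in that case $G_A$ decomposes into the $m$ vertex-disjoint gadgets, and on $\{a_i,b_i\}$ there are exactly two cycle covers. The two self-loops give a pair of length-$1$ cycles with signed weight $(-1)\cdot y_i = -y_i$, while the $2$-cycle $a_i \to b_i \to a_i$ has signed weight $(-1)^{1+2}\cdot(1)(-1) = 1$; these sum to $1 - y_i$. Since the gadgets are independent, the total contribution of all dashed-free covers is $\prod_i (1 - y_i)$.

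The step I expect to be the main obstacle is a \emph{forcing} argument showing that a cover using even one dashed edge must use all of them. Concretely, suppose the dashed edge $a_j \to a_{j+1}$ lies in the cover, so $a_{j+1}$ is already the image of $a_j$. The second vertex $b_{j+1}$ has only two out-edges, to $a_{j+1}$ or its own self-loop, and the former is impossible since $a_{j+1}$ already has an in-edge; hence $b_{j+1}$ self-loops with weight $y_{j+1}$. Then $b_{j+1}$ is fully occupied, so the edge $a_{j+1} \to b_{j+1}$ is unused, and as $a_{j+1}$ also cannot self-loop (it already receives the dashed edge from $a_j$), its only remaining out-edge is the next dashed edge $a_{j+1} \to a_{j+2}$. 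Iterating around the cycle forces the complete dashed cycle to appear and every $b_i$ to carry its $y_i$-loop, so there is a \emph{unique} cover using a dashed edge.

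Finally I would compute the signed weight of that unique cover. The dashed cycle has length $m$ and weight $(-1)^{m-1}$, coming from $m-1$ edges of weight $-1$ and one of weight $+1$, so it contributes the cycle sign $(-1)^{m+1}$ times $(-1)^{m-1}$; the $m$ self-loops each have length $1$, trivial sign, and weight $y_i$. The signs cancel, $(-1)^{m+1}(-1)^{m-1} = 1$, leaving exactly $\prod_i y_i$. Adding the two cases yields $\det(A) = \prod_i y_i + \prod_i (1 - y_i) = g$, and by construction each $y_i$ occurs in precisely one entry while all remaining entries lie in $\{0,1,-1\}$, as required.
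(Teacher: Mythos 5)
Your proposal is correct and follows essentially the same route as the paper: the cycle-cover characterization of $\det(A)$, a split of covers according to whether a dashed edge is used, and the identical sign computations $(-1)^{m+1}(-1)^{m-1}\prod_i y_i = \prod_i y_i$ and $1 - y_i$ per gadget. Your explicit forcing argument (that one dashed edge compels the whole dashed cycle and all $y_i$-loops) fills in a step the paper merely asserts, but it is the same underlying observation.
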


\begin{remark}
For the permanent, every $-1$ above the diagonal of the matrix $A$ should be changed into a $1$, and the $2 \times 2$ blocks into
$\left[ 
\begin{array}{cc}
-1 & 1 \\ 
1 & y_1  \\
\end{array}
\right]$.

\end{remark}

\begin{proof}[Proof of Lemma \ref{lem:two_vars}]
Let $B \in (\X \cup \FF)^{n \times n}$ and $S_1, \ldots, S_\ell \subseteq \X$ be such that $S_1, \ldots, S_\ell$ are disjoint. Then, by previous
\[
\Sigma_{S_\ell} \det(B) = \Sigma_{\{x_1, y_1\}} \dots \Sigma_{\{x_m, y_m\}}g\det(B) = \Sigma_{\{x_1, y_1\}} \dots \Sigma_{\{x_m, y_m\}}\det\left(\begin{array}{cc}
    B & 0 \\
    0 & A
\end{array}\right)
\]
Notice here that $m \le |S_\ell|$ and the size of the block matrix is at most $n + 2m \le n + 2|S_\ell|$. We can then use induction on $S_1, \ldots, S_{\ell-1}$ to conclude that we can write
\[
\Sigma_{S_1} \dots \Sigma_{S_\ell}\det(A) = \Sigma_{T_1} \dots \Sigma_{T_{\ell'}} \det(C),
\]
where $\ell' \le \sum_{i=1}^\ell|S_\ell| \le n^2$ and the size of $C$ is at most $n + 2\sum_{i=1}^\ell|S_\ell| \le n + 2n^2$.
\end{proof}

\section{The permanent is VNP-complete}\label{sec:vnp_complete}
\label{sec:perm}

This section will finally provide the proof of Theorem \ref{thm:vnp_hard}, showing that the three properties we have isolated are sufficient for VNP-hardness. The proof proceeds similarly to Valiant's VNP-completeness proof of the permanent with a few variations. This section thus also provides a (arguably) more modular proof of the VNP-completeness of the permanent.

\begin{proof}[Proof of Theorem \ref{thm:vnp_hard}]
Because $f$ simulates formulas, we can write every $n$-variate $g \in \FF[\Y]$ in VNP (see~\cite{burgisser2013completeness,shpilka2010arithmetic}) as
$$g = (\Sigma_{S_1} \ldots \Sigma_{S_\ell} f_d)(a)$$
where $a \in (\FF \cup \Y)^{\X_d}$ is a variable assignment such that $d$ is polynomial in $n$, each entry in $a$ is a variable or a field element, and
$S_1,\ldots,S_\ell$ are pairwise disjoint (implying $\ell$ is polynomial in $d$).

Because $k$ variables are sufficient for boolean sums over $f$, we can assume without loss of generality that $|S_1| = \dots = |S_\ell| = k$.
Further, as $\alpha_f(k) < \infty$, we know that we can write
\[
\Sigma_{S_\ell}f_d = f_{d+\alpha_f(k)}(b),
\]
for some simple variable assignment $b \in (\FF \cup \Y)^{\X_{i + \alpha_f(k)}}$.
Observe that, because this is a simple projection, we can easily (after modifying our sets $S_1, \ldots, S_{\ell-1}$) bring out $b$ and write $(\Sigma_{S_1} \dots \Sigma_{S_{\ell-1}}f_{d+\alpha_f(k)})(b') = g$, for some assignment $b'$.
Thus, after repeating this process $\ell$ times, we can write
\[
(\Sigma_{S_1} \dots \Sigma_{S_\ell}f_d)(a) = f_{d + \ell\alpha_f(k)}(c),
\]
where $c \in (\FF \cup \Y)^{\X_{d + \ell\alpha_f(k)}}$ is a variable assignment. Because $\ell$ and $d$ are polynomial in $n$, we can conclude that $f$ is VNP-hard.
\end{proof}

\begin{obs}
We will then conclude with the observation that this proof also works for proving $\sharp$P-hardness if each of the steps that we took in this proof can be done in polynomial time.
Consider $\sharp$3-SAT, a complete problem for $\sharp$P, where we are given a 3-CNF and must return the number of satisfying assignments it has.
Consider some 3-CNF $\varphi$ in $n$ variables, and observe that the number of possible clauses is polynomial in $n$, so we can easily encode it using one bit for each possible clause.
We will write this number as $m_n$, and one can easily see that there is a polynomial $f_n \in \FF[y_1, \ldots, y_{m_n}, x_1, \ldots, x_n]$ such that, if we provide the $y_i$ variables with the 0/1 encoding of $\varphi$ and the $x_i$ variables with a variable assignment, it will output 0/1 depending on whether this assignment is a satisfying assignment of $\varphi$.
One can easily see that $f_n$ can be computed by a formula that is polynomial in $n$.
Finally, we observe that the polynomial $g_n = \Sigma_{\{x_1\}} \dots \Sigma_{\{x_n\}}f_n$ provides the exact answer for $\sharp$3SAT when given the binary encoding of a 3-CNF.
Hence, we can carry out our previous proof from the sequence $g = (g_n)_{n \in \NN}$ to reduce it to a polynomial with the three properties (ensuring that the steps along the way can be done in polynomial time).
At the end, we conclude that we can reduce $\sharp$3-SAT to our polynomial sequence.
\end{obs}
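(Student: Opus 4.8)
The plan is to instantiate the generic VNP-hardness reduction of Theorem~\ref{thm:vnp_hard} on a concrete $\sharp$P-complete problem and then check that every ingredient of that reduction is \emph{effective}, i.e.\ runs in polynomial time. I take $\sharp3$-SAT as the canonical $\sharp$P-complete problem: given a $3$-CNF $\varphi$ on $n$ variables, output its number of satisfying assignments. The first step is to encode an arbitrary such $\varphi$ by a bit-vector. Since there are only $O(n^3)$ possible $3$-clauses over $n$ variables, I introduce one indicator variable $y_c$ per possible clause $c$ (so $m_n = O(n^3)$ variables in total), with $y_c = 1$ meaning that $c$ occurs in $\varphi$. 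This encoding is computable in polynomial time from the usual description of $\varphi$.

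Next I build a polynomial $f_n \in \FF[y_1,\ldots,y_{m_n},x_1,\ldots,x_n]$ that, on every $0/1$ setting of the $y$'s and $x$'s, evaluates to the indicator that the assignment $x$ satisfies the formula coded by $y$. Concretely one can take
\[
f_n = \prod_{c} \bigl(1 - y_c + y_c\cdot \mathrm{sat}_c(x)\bigr),
\]
where $\mathrm{sat}_c(x)\in\{0,1\}$ is the constant-size indicator that clause $c$ is satisfied by $x$: each factor contributes $1$ when $c$ is absent and the satisfaction indicator when $c$ is present. As a product of $O(n^3)$ constant-size sub-formulas, $f_n$ has a formula of size polynomial in $n$, so $(f_n)$ lies in VF and is explicitly constructible in polynomial time. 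Summing out the assignment variables, the sequence
\[
g_n = \Sigma_{\{x_1\}}\cdots\Sigma_{\{x_n\}} f_n
\]
then has the key property that, evaluated at the $0/1$ code of any $\varphi$, it returns exactly the number of satisfying assignments of $\varphi$; in particular $(g_n)$ is a VNP sequence presented as an iterated Boolean sum of the VF sequence $(f_n)$.

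Now I replay the proof of Theorem~\ref{thm:vnp_hard} verbatim on $g$. Using that $f$ simulates formulas, I write $g_n$ as a Boolean sum of a projection of some $f_d$ with $d$ polynomial in $n$; using that $k$ variable instances suffice (Lemma~\ref{lem:two_vars} for $\perm$) I normalise all summation sets to size $k$; and using $\alpha_f(k)<\infty$ I collapse the $\ell = \mathrm{poly}(n)$ Boolean sums one at a time, each at additive cost $\alpha_f(k)$, arriving at a single simple projection $f_{d+\ell\alpha_f(k)}(c)$. The point to stress is that, having fixed the $y$-variables to the concrete $0/1$ code of $\varphi$, every variable is ultimately substituted by a field constant, so the assignment $c$ carries no free variables and $f_{d+\ell\alpha_f(k)}(c)$ is an honest number equal to $\sharp\mathrm{SAT}(\varphi)$. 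Thus $\varphi \mapsto c$ is a parsimonious many-one reduction.

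The only real obstacle — and exactly the hypothesis added in the statement — is verifying that this reduction is computable in polynomial time, not merely that it exists. The formula-simulation projection, the gadget rewriting of a many-variable sum into two-variable sums (the explicit matrices $A$ of Lemma~\ref{lem:two_vars}), and the $\alpha_f(k)$-collapse must each be produced by a polynomial-time algorithm, and they are composed only $\mathrm{poly}(n)$ many times. Since each building block is given by an explicit, efficiently describable construction and the index grows only by $d+\ell\alpha_f(k)=\mathrm{poly}(n)$, the composite map $\varphi\mapsto c$ runs in polynomial time. A single evaluation of $f$ at $c$ then solves $\sharp3$-SAT, so computing $f$ is $\sharp$P-hard; if one prefers Turing rather than many-one $\sharp$P-hardness, the same point $c$ is used and nothing changes.
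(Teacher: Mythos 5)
Your proposal is correct and follows essentially the same route as the paper: encode the $3$-CNF with one indicator bit per possible clause, build the poly-size-formula indicator polynomial $f_n$, sum out the assignment variables to get $g_n$, and rerun the proof of Theorem~\ref{thm:vnp_hard} while checking each projection step is computable in polynomial time. Your explicit product formula $\prod_{c}\bigl(1-y_c+y_c\,\mathrm{sat}_c(x)\bigr)$ and the remark that the resulting reduction is parsimonious are just worked-out details of what the paper leaves as ``one can easily see.''
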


\bibliographystyle{amsplain}
\bibliography{bools}   

\end{document}